\newcommand{\InsertFig}[3]{
  \begin{figure}[!htbp]
    \begin{center}
      \leavevmode
      #1
      \caption{#2}
      #3
      %\label{#3}
    \end{center}
  \end{figure}
}
\newtheorem{remark}[theorem]{Remark}
\newtheorem{example}[theorem]{Example}
\newtheorem{fact}{Fact}
\newcommand{\ceil}[1]{\ensuremath{\left\lceil #1 \right\rceil}}
\newcommand{\thmref}[1]{Theorem~\ref{#1}}
\newcommand{\lemref}[1]{Lemma~\ref{#1}}
\newcommand{\figref}[1]{Figure~\ref{#1}}
\newcommand{\abs}[1]{\ensuremath{\left| #1 \right|}}
\newcommand{\MR}[1]{\ensuremath{\text{MR}_{#1}}}
\newcommand{\R}{\mathbb{R}}
\newcommand{\de}{\partial}
\newcommand{\intm}[1]{\int\limits_{-1}^{0} #1~d\mu}
\newcommand{\intp}[1]{\int\limits_{0}^{1} #1~d\mu}
\newcommand{\psip}[1]{\psi^{\left( #1\right)}_+}
\newcommand{\psim}[1]{\psi^{\left( #1\right)}_-}
\newcommand{\psiv}[1]{\psi^{\left( #1\right)}}
\newcommand{\Phip}[1]{{\phi^{\left( #1\right)}_+}} 
\newcommand{\Phiv}[1]{{\phi^{\left( #1\right)}}} 
\newcommand{\Phim}[1]{{\phi^{\left( #1\right)}_-}}
\newcommand{\Phipm}[1]{{\phi^{\left( #1\right)}_\pm}}
\newcommand{\range}[1]{R \left( #1 \right)}
\newcommand{\support}[1]{\text{supp}\left( #1 \right)}
\newcommand{\bb}{\mathbf{b}}
\newcommand{\bw}{\mathbf{w}}
\newcommand{\psipm}[1]{\psi^{\left( #1\right)}_{\pm}}
\numberwithin{equation}{section}
\title{Higher order mixed moment approximations for the Fokker-Planck equation in one space dimension}
\author{
Florian Schneider\thanks{Fachbereich Mathematik, TU Kaiserslautern, Erwin-Schr\"odinger-Str., 67663 Kaiserslautern, Germany, {\tt fschneid@mathematik.uni-kl.de}}
\and
Graham Alldredge\thanks{Department of Mathematics, RWTH Aachen University, Schinkelstr. 2,52062 Aachen, Germany, {\tt Graham Alldredge <alldredge@mathcces.rwth-aachen.de>}}
\and
Martin Frank\thanks{Department of Mathematics, RWTH Aachen University, Schinkelstr. 2,52062 Aachen, Germany, {\tt Martin Frank <frank@mathcces.rwth-aachen.de>}}
\and
Axel Klar\thanks{Fachbereich Mathematik, TU Kaiserslautern  and  Fraunhofer Institut f\"ur Techno- und Wirtschaftsmathematik, 67663 Kaiserslautern, Germany, {\tt klar@itwm.fraunhofer.de}}}
\date{}
\begin{document}

\maketitle
\slugger{siap}{xxxx}{xx}{x}{x--x}%slugger should be set to mms, siap, sicomp, sicon, sidma, sima, simax, sinum, siopt, sisc, or sirev
\begin{abstract}
We study mixed-moment models (full zeroth moment, half higher moments) for a Fokker-Planck equation in one space dimension. Mixed-moment minimum-entropy models are known to overcome the zero net-flux problem of full-moment minimum entropy M${}_n$ models. Realizability theory for these mixed moments of arbitrary order is derived, as well as a new closure, which we refer to as Kershaw closures. They provide non-negative distribution functions combined with an analytical closure.
Numerical tests are performed with standard first-order finite volume schemes and compared with a finite-difference Fokker-Planck scheme.
\end{abstract}

\noindent

% {\bf Key words.}

\ifpdf
    \graphicspath{{Images/PNG/}{Images/PDF/}{Images/}}
\else
    \graphicspath{{Chapter4/Chapter4Figs/EPS/}{Chapter4/Chapter4Figs/}}
\fi

%%%%%%%%%%%%%%%%%%%%%%
\section{Introduction}
%%%%%%%%%%%%%%%%%%%%%%
%
In recent years many approaches have been proposed for the solution of time-dependent kinetic transport equations, for example in electron radiation therapy or radiative heat transfer problems. Since transport equations are very high-dimensional one tries to approximate them, for example, by so-called moment models. Typical representatives are polynomial P${}_n$-methods \cite{Jea17,Eddington,Brunner2005} and their simplifications, the SP${}_n$ \cite{Gel61} methods. These models are computationally inexpensive since they form an analytically closed system of hyperbolic differential equations. However, they suffer from severe drawbacks:
The P${}_n$ methods are generated by closing the  balance equations with a distribution function which is a polynomial in the angular variable. This implies that this distribution function might have negative values resulting in non-physical values like a negative density.  Additionally, in many cases a very high number of moments is needed for a reasonable approximation of the transport solution. This is particularly true in beam cases, where the exact transport solution forms a Dirac delta.

The entropy minimization M${}_n$-models \cite{Min78,DubFeu99,BruHol01,Monreal2008,AllHau12} are expected to overcome this problem since the moment equations are always closed with a positive distribution. In many situations these models perform very well, but they can produce unphysical steady-state shocks due to a zero net-flux problem. It has been shown by Hauck that these shocks exist for every odd order \cite{Hauck2010}. Perturbation theory has been applied to the M${}_1$ closure which, in some cases, removes the drawback of shocks \cite{Frank:1471763}.
 To improve this situation, half- or partial-moment models were introduced in \cite{DubKla02,FraDubKla04}. These models work especially well in one space dimension, because they capture the potential discontinuity of the probability density in the angular variable which in 1D is well-located. If however, a Fokker-Planck operator is used instead of the standard integral scattering operator, so that scattering is extremely forward-peaked \cite{Pom92}, these half-moment approximations do not work for reasons we will explain below.
 
To improve this situation a new model with mixed moments was proposed in \cite{Frank07} which is able to avoid this problem. Instead of choosing  full or half moments, a mixed-moment approximation is used. Contrary to a typical half moment approximation, the lowest order moment (density) is kept as a full moment while all higher moments are half moments. This ensures the continuity of the underlying distribution function while maintaining the flexibility of the half moment approach.

The fact that this modification of the moment model gives reasonable results encourages us to study the general case for an arbitrary number of moments. A problem that was left open in the previous work \cite{Frank07} was the question of realizability of these mixed moments. We will define realizability precisely below. 

The first part of the  present paper deals with the investigation of necessary and sufficient realizability conditions for the mixed moment problem of arbitrary order. We note that in contrast to the full moment problem, the  mixed moment problem has not been discussed in literature up to now.\\
 In the second part of the paper, higher-order mixed-moment closures of the balance equations are derived, investigated and compared to each other, continuing the work done in \cite{Frank07} where mixed moments of order one were introduced. These closures, which we call Kershaw closures because they are inspired by ideas outlined in \cite{Ker76}, give an explicit, analytical closure for the moment system and are therefore very efficient for higher orders since no nonlinear system has to be solved. The paper is organized as follows.\\
  In Section~\ref{momapprox}, we give an overview of the basic Fokker-Planck equation and the different moment models which can be derived from it.  In particular, half-moment approximations and mixed half-full moment approximations are discussed. Section \ref{mixed} contains the necessary and sufficient conditions for realizability of mixed moments of arbitrary order. Several higher-order mixed-moment closures of the balance equations including polynomial, minimum-entropy and Kershaw closures are developed in Section \ref{closure}.
The  resulting set of equations are numerically compared with each other and the Fokker-Planck equation in Section~\ref{sec:SIM}. We conclude with a discussion of the results and present open problems for future work in Section~\ref{sec:CON}.

%%%%%%%%%%%%%%%%%%%%%%
\section{Moment approximations and realizability}
\label{momapprox}
We consider the one-\\dimensional Fokker-Planck equation for $t>0$, $x \in \R$, $\mu \in [-1,1]$
\begin{align}
\label{eq:FokkerPlanck1D}
\partial_t\psi(t,x,\mu) + \mu \partial_x  \psi(t,x,\mu) + \sigma_a(t,x) \psi(t,x,\mu) = \frac{T(t,x)}{2}\Delta_\mu \psi(t,x,\mu) + Q(t,x,\mu),
\end{align}
where $\sigma_a$ is the absorption coefficient, $T$ is the so-called transport coefficient, $Q$ a source and $\Delta_\mu$ is given by the one-dimensional projection of the Laplace-Beltrami operator on the unit sphere
\begin{align}
\Delta_\mu \psi = \frac{\partial}{\partial\mu}\left(\left(1-\mu^2\right) \frac{\partial \psi}{\partial \mu}\right).
\end{align}
Equation \eqref{eq:FokkerPlanck1D} can be derived from the standard linear transport equation in the limit of forward-peaked anisotropic scattering \cite{Pom92}. The transport coefficient is the difference of the zeroth and the first moment of the scattering kernel.

We define the moments of the probability density $\psi$ as
$$
\psi^{(j)} = \int_{-1}^1 \mu^j \psi d\mu,~~j\geq 0.
$$
We multiply (\ref{eq:FokkerPlanck1D}) by $1$ and $\mu$, respectively, integrate over $[-1,1]$ and obtain
\begin{align}
	 \partial_t \psi^{(0)} + \partial_x \psi^{(1)} + \sigma_a \psi^{(0)}  &= Q^{(0)}\\
	 \partial_t \psi^{(1)} + \partial_x \psi^{(2)}   + \sigma_a \psi^{(1)} &= - \frac{T}{2}\psi^{(1)} + Q^{(1)} .
\end{align}
This is a system of moment equations, which could in principle be continued. If however, we want to truncate this system at a finite order we see that the system is underdetermined (two equations for the three unknowns $\psi^{(0)}$, $\psi^{(1)}$, $\psi^{(2)}$). Thus we need to devise an additional relation which is called moment closure. Typically, this is done by assuming a specific form for the density, known as an \emph{ansatz}, which depends on the known moments. For instance, we could assume that $\psi$ is a linear function in $\mu$. It matches the known moments $\psi^{(0)}$ and $\psi^{(1)}$ if
\begin{equation}
\label{eq:P1}
\psi_{P_1}(\mu) = \frac12 \psi^{(0)} + \frac32 \psi^{(1)} \mu.
\end{equation}
This distribution can be integrated to obtain $\psi^{(2)}$ in terms of $\psi^{(0)}$ and $\psi^{(1)}$.

When dealing with closures, it is always crucial to ask which moments can be realized by a physical (that is, nonnegative) distribution. In the case of two moments, we have as a necessary condition that
\begin{equation}\label{eq:M1Rrealizability}
	\psi^{(0)}=\int_{-1}^1  \psi d\mu\geq 0 \quad\text{and}\quad |\psi^{(1)}|=|\int_{-1}^1 \mu \psi d\mu| \leq \psi^{(0)}.\end{equation}
Here these conditions are also sufficient, that is, for each pair of moments $(\psi^{(0)},\psi^{(1)})$ that satisfies (\ref{eq:M1Rrealizability}) one can find a nonnegative density that reproduces these moments. This is the realizability problem mentioned in the introduction. (Note, that realizability of course does not guarantee the positivity of any representing distribution:  the linear closure (\ref{eq:P1}) can become negative even when $\psi^{(0)}$ and $\psi^{(1)}$ satisfy \eqref{eq:M1Rrealizability}.)

This paragraph follows the definitions and results from \cite{CurFial91}. For more details on the original truncated Hausdorff moment problem see e.g. \cite{curto1996solution,Curto1997,Laurent04revisitingtwo}.
%\begin{definition}
%Given an infinite sequence of real numbers $\gamma = \left\{\gamma_0,\gamma_1,\ldots \right\}$, an interval $[a,b]\subset\mathbb{R}$, the \textbf{K-power Hausdorff moment problem with data} $\gamma$ entails finding a positive Borel measure $\mu$ on $\mathbb{R}$ such that
%\begin{align*}
%\int t^j ~d\mu(t) = \gamma_j ~~~~0\leq j \leq \infty
%\end{align*}
%and $supp~\mu\subseteq [a,b]$
%\end{definition}
%\begin{remark}
%This can be translated in the context of radiation transport:
%Find a {non-negative distribution function} $\psi\geq 0$ such that
%\begin{align*}
%\int_{-1}^1 \mu^j \psi(\mu) ~d\mu = \psiv{j} ~~~~0\leq j \leq \infty
%\end{align*}
%\end{remark}
%We are especially interested in the so called \emph{truncated K-power Hausdorff moment problem} which is given by:
\begin{definition}
Given a vector of real numbers $\left( \psiv{0},\psiv{1},\ldots,\psiv{n} \right)$ and an interval $[a,b]\subset\mathbb{R}$, the \emph{truncated Hausdorff moment problem} entails finding a positive Borel measure $\lambda$ on $\mathbb{R}$ such that
\begin{align}
\label{eq:TruncMoment}
\int \mu^j ~d\lambda(\mu) = \psiv{j} ~~~~0\leq j \leq n
\end{align}
and $\rm{supp}~\lambda\subseteq [a,b]$.

We also define the \emph{realizability domain} $\mathcal{R}_n$ as the set of vectors $\left(\psiv{0},\psiv{1},\ldots,\psiv{n} \right) \in \mathbb{R}^{n+1}$ such that the truncated moment problem \eqref{eq:TruncMoment} has a solution for $[a,b] = [-1, 1]$.
\end{definition}
\begin{remark}
From now on we assume that $\lambda$ has a distribution so that the truncated moment problem can be translated to:
Find a {non-negative distribution} $\psi(\mu)\geq 0$ such that
\begin{align*}
\int_a^b \mu^j \psi(\mu) ~d\mu = \psiv{j}~~~~0\leq j \leq n.
\end{align*}
Such a density $\psi$ is said to be a \emph{representing distribution} of the moment vector $\left( \psiv{0},\psiv{1},\ldots,\psiv{n} \right)$.

We allow (at least formally) the case when $\psi$ includes a finite linear combination of Dirac $\delta$-functions.  Distributions $\psi$ which consist only of a linear combination of Dirac $\delta$-functions are called \emph{atomic}.
%In fact, it can be shown that in case of solvability a finite atomic measure can be found which at least formally can be written as having a density consisting of a linear combination of Dirac $\delta$ functions.
\end{remark}

\begin{definition}
In the following, the inequality $A\geq 0$ indicates that a matrix $A$ is positive semidefinite. For hermitian matrices $A,B\in\R^{n\times n}$ we define the partial ordering $"\geq"$ by $A\geq B$ iff $A-B\geq 0$, that is $A-B$ is positive semi-definite.
\end{definition}

The main result is the following:
\begin{theorem}
\label{thm:FullMomentRealizability}
%For the truncated Hausdorff moment problem, the following realizability conditions hold.
Define the Hankel matrices
$$
A(k):=\left(\psiv{i+j}\right)_{i,j=0}^k, \quad B(k):=\left(\psiv{i+j+1}\right)_{i,j=0}^k, \quad C(k):=\left(\psiv{i+j}\right)_{i,j=1}^k.
$$
Then the truncated Hausdorff moment problem has a solution if and only if
\begin{itemize}
\item for $n=2k+1$,
\begin{align}
%A(k)\geq 0 \qquad \text{and} \\
bA(k)\geq B(k)\geq aA(k); \label{eq:haus-odd}
\end{align}
\item for $n=2k$,
\begin{align}
A(k) &\geq 0 \qquad \text{and} \label{eq:haus-even-1} \\
(a+b)B(k-1) &\geq abA(k-1) + C(k). \label{eq:haus-even-2}
\end{align} 
\end{itemize}
\end{theorem}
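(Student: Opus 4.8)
The plan is to pass from the moment vector to the Riesz functional and read every Hankel inequality as a positivity statement. Define the linear functional $L$ on polynomials of degree at most $n$ by $L(\mu^j) = \psiv{j}$, extended by linearity. The key observation is that each matrix condition in the statement is precisely the assertion that $L$ is nonnegative on a certain family of squares. Writing $p(\mu) = \sum_i c_i \mu^i$ with coefficient vector $c$, one computes $c^{T}(B(k) - aA(k))c = L\big((\mu - a)\,p(\mu)^2\big)$ and $c^{T}(bA(k) - B(k))c = L\big((b - \mu)\,p(\mu)^2\big)$ in the odd case, while in the even case $c^{T}A(k)c = L(p^2)$ and the form attached to $(a+b)B(k-1) - abA(k-1) - C(k)$ equals $L\big((\mu-a)(b-\mu)\,p^2\big)$. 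Thus the matrix inequalities are equivalent to $L \geq 0$ on these products of squares, and all polynomials involved have degree $\leq n$, hence lie in the domain of $L$.

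Necessity is then immediate. If a representing measure $\lambda$ with $\rm{supp}~\lambda \subseteq [a,b]$ exists, then every polynomial $q$ with $q \geq 0$ on $[a,b]$ satisfies $L(q) = \int q\,d\lambda \geq 0$. Applying this to $q = (\mu-a)p^2$, $q = (b-\mu)p^2$, $q = p^2$, and $q = (\mu-a)(b-\mu)p^2$, each nonnegative on $[a,b]$ and of degree $\leq n$, recovers exactly the Hankel inequalities of the two cases.

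For sufficiency I would invoke the Markov--Lukács description of polynomials nonnegative on an interval: one of odd degree $\leq 2k+1$ nonnegative on $[a,b]$ is $(\mu-a)\sigma_1 + (b-\mu)\sigma_2$ with $\sigma_1,\sigma_2$ sums of squares of degree $\leq 2k$, and one of even degree $\leq 2k$ is $\sigma_0 + (\mu-a)(b-\mu)\sigma_1$ with $\sigma_0$ a sum of squares of degree $\leq 2k$ and $\sigma_1$ a sum of squares of degree $\leq 2k-2$. Decomposing each $\sigma$ into squares of polynomials of the appropriate degree and using the equivalences of the first step together with the linearity of $L$, the Hankel conditions force $L(q) \geq 0$ for \emph{every} polynomial $q$ of degree $\leq n$ that is nonnegative on $[a,b]$. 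It then remains only to manufacture an actual measure from this positivity.

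The main obstacle is exactly this last passage, from ``$L$ nonnegative on nonnegative polynomials'' to ``$L$ is represented by a measure supported on $[a,b]$,'' because for truncated problems the naive duality between moments and nonnegative polynomials can break down at the boundary. The clean resolution uses compactness of $[a,b]$: the truncated moment cone, i.e.\ the image in $\R^{n+1}$ of the measures on $[a,b]$ under the moment map, is then closed, so it coincides with the dual cone of the nonnegative polynomials, and membership of $L$ in that dual cone yields the desired $\lambda$. Alternatively, following Curto--Fialkow, one proceeds constructively through a flat (rank-preserving) extension of the moment matrix, which both produces a finitely atomic representing distribution and handles the singular boundary cases via the rank of $A(k)$.
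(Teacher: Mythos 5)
Your argument is essentially correct, but it follows a genuinely different route from the paper, which does not prove \thmref{thm:FullMomentRealizability} at all: it simply defers to \cite{CurFial91}. The Curto--Fialkow proof is constructive and matrix-theoretic --- it proceeds by analyzing the rank and recursiveness of the Hankel matrices and building a \emph{flat extension}, which simultaneously establishes existence and produces the minimal finitely atomic representing measure via its generating function. That by-product is not cosmetic here: the paper relies on it later (\rmref{rem:GeneratingFunction} and the construction of the Kershaw closures) to obtain explicit atoms and weights on the realizability boundary. Your route --- identifying each Hankel inequality with positivity of the Riesz functional $L$ on weighted squares, invoking the Markov--Luk\'acs representation to conclude $L(q)\ge 0$ for every $q\ge 0$ on $[a,b]$ of degree at most $n$, and then using closedness of the truncated moment cone over a compact interval (the cone generated by the compact convex image of the probability measures, which does not contain the origin) together with bipolarity --- is the classical duality proof. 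It is shorter and conceptually cleaner for pure existence, and you correctly flag the one genuinely delicate step (passing from positivity of $L$ to a representing measure) and resolve it by compactness; but it is non-constructive and gives no information about atomic representatives or uniqueness on the boundary. One small imprecision to repair: you state Markov--Luk\'acs according to the parity of $\deg q$, whereas you need it according to the parity of the \emph{truncation order} $n$. For instance, when $n=2k+1$ and $q\ge 0$ on $[a,b]$ has even degree, its Luk\'acs decomposition is $\sigma_0+(\mu-a)(b-\mu)\sigma_1$, and the term $L\left((\mu-a)(b-\mu)\sigma_1\right)$ is not directly one of your certified forms. This is patched by the identity
\begin{equation*}
(\mu-a)(b-\mu)\,\sigma \;=\; (\mu-a)\,\frac{(b-\mu)^2\sigma}{b-a} \;+\; (b-\mu)\,\frac{(\mu-a)^2\sigma}{b-a},
\end{equation*}
which rewrites everything in the form $(\mu-a)(\text{SOS})+(b-\mu)(\text{SOS})$ within the required degree bound (and the analogous trick with $1=\frac{(\mu-a)+(b-\mu)}{b-a}$ handles the even-$n$ case); with that adjustment the sufficiency argument closes.
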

\begin{proof}
See \cite{CurFial91} for a detailed proof of this theorem.
\end{proof}

The following remark will be important later on.
\begin{remark}
\label{rem:GeneratingFunction} In \cite{CurFial91} the authors showed that, starting from the previous theorem, there exists a minimal atomic representing distribution $\psi$ (in the sense that it contains the fewest possible number of Dirac $\delta$ functions while still representing the moments) and that one can directly find this distribution with the help of its \emph{generating function}. This is the consequence of a recursiveness property of the Hankel matrices $A(k)$ and $B(k)$.\\
Let $\Phi = \left(\varphi_0,\ldots,\varphi_{r-1}\right):= A(r-1)^{-1}v(r,r-1)$, where $v(i,j) = \left(\psiv{i+l}\right)_{l=0}^j$ and $r$ is the smallest integer such that $A(r - 1)$ is nonsingular,%
\footnote{
This integer $r$ is the \emph{Hankel rank} defined in \cite{CurFial91}.
} %
the generating function is defined by
\begin{align*}
g_\varphi(\mu) = \mu^r-\sum\limits_{i=0}^{r-1}\varphi_i\mu^i
\end{align*}
The roots of this polynomial give the atoms $\mu_i$ of the distribution $\psi$ whereas the densities are calculated afterwards from the Vandermonde system 
\begin{align*}
\rho_0\mu_0^i+\cdots +\rho_{r-1}\mu_{r-1}^i = v(i,0)~~~i=0\ldots r-1.
\end{align*}
Furthermore, when the moment vector $\left(\psiv{0},\psiv{1},\ldots,\psiv{n} \right)$ is on the boundary of the realizability domain (which is equivalent to $r < k + 1$, where $k$ is as defined in \thmref{thm:FullMomentRealizability}), the minimal atomic representing measure is the unique representing measure.

Note that finding the atoms can be done in a robust way. Although the condition of the Vandermonde system is very bad (namely exponential in $r$ \cite{Gautschi1987}), efficient and robust algorithms for the solution of this linear system exist \cite{Bjorck1970}. In our paper we will only provide results where the solution of this system can be obtained analytically using symbolic calculations.
\end{remark}

%%%%%%%%%%%%%%%%%%%%%%%%%%%%%%%%%%%
\section{Moment closures}
Several moment approximations have been extensively studied. Here we briefly recall some of them and also add a new closure strategy in Section \ref{sec:kershaw}.

We first remark that all closures provided here have benefits and drawbacks. The polynomial closure P${}_n$ (Section \ref{sec:Pn}) as well as minimum-entropy closures (for full, half and mixed moments, Sections \ref{sec:FMMinimumEntropy}, \ref{sec:HMMinimumEntropy} and \ref{sec:NEW}, respectively) are strictly hyperbolic. This is not the case for QMOM, which we describe in detail in Section \ref{sec:QMOM}. For arbitrary-order Kershaw closures (Section \ref{sec:kershaw}) it is not known if the system is strictly hyperbolic. 
Higher-order minimum-entropy models provide a good approximation of the underlying kinetic problem but suffer from high numerical costs for the inversion of the moment problem. On the other hand they can be stated and implemented and can preserve realizability without the knowledge of the realizability conditions. If one knows the realizability conditions, the numerical costs can be avoided using corresponding Kershaw closures which are based on analytical closures of the higher moments without the need of numerical inversion. They are therefore comparable in speed with P${}_n$ models while having the same realizability domain as the true solution. However this is (especially in more than one spatial dimension) not yet always possible.

\subsection{P${}_n$-equations}
\label{sec:Pn}
%%%%%%%%%%%%%%%%%%%%%%%%%%%%%%%%%%%
%
The $P_n$ equations can be most easily understood as a \\Galerkin semi-discretization in the angle $\mu$. We expand the distribution function in terms of a truncated series expansion,
\begin{equation}
	\psi_{P_N}(t,x,\mu) = \sum_{l=0}^N \psi^{(l)}(t,x) \frac{2l+1}{2} P_l(\mu),
\end{equation}
where P${}_l$ are the Legendre polynomials. The expansion  coefficients are the moments with respect to the Legendre polynomials:
\begin{equation}
	 \psi^{(l)}(t,x) = \int_{-1}^1 \psi(t,x,\mu) P_l(\mu)d\mu.
\end{equation}
Note that we use the index $l$ to distinguish these moments from the moments taken with respect to the monomials.
The Galerkin projection is done by testing the Fokker-Planck equation with $P_l$, integrating with respect to $\mu$ over $[-1,1]$ and using the recursion relation of the Legendre polynomials to obtain the P${}_N$ equations
\begin{equation}
%\begin{split}
	\partial_t \psi^{(l)} + \partial_x\left( \frac{l+1}{2l+1} \psi^{(l+1)}+\frac{l}{2l+1}\psi^{(l-1)}\right)
	+ \sigma_a \psi^{(l)} = - \frac{T}{2} l(l+1) \psi^{(l)} + Q^{(l)}
%\end{split}
\end{equation}
for $l=0,\dots,N$. Recall that $\psi^{(N+1)}=0$. We have also used the fact that the Legendre polynomials are eigenfunctions of the Laplace-Beltrami operator.

%%%%%%%%%%%%%%%%%%%%%%%%%%%%
\subsection{Minimum Entropy Closure}
\label{sec:FMMinimumEntropy}
%%%%%%%%%%%%%%%%%%%%%%%%%%%%
%
The approximations based on the expansion of the distribution function into a polynomial suffer from several drawbacks~\cite{DubFeu99}. As mentioned above, the distribution function can become negative and thus the moments computed from this distribution can become unphysical. One way to overcome this problem is to use an entropy minimization principle to obtain the constitutive equation to close the moment equations. The first step to derive the minimum entropy (M${}_1$) model~\cite{DubFeu99,AniPenSam91,BruHol01} is identical to the P${}_n$ method. We test with the monomials and get 
\begin{align}
\partial_t \psi^{(0)}+\partial_x \psi^{(1)} +  \sigma_a \psi^{(0)} &= Q^{(0)}\\
\partial_t \psi^{(1)}+\partial_x \psi^{(2)} +  \sigma_a \psi^{(1)}  &= - T \psi^{(1)} + Q^{(1)}.
\end{align}
Now instead of using the Galerkin ansatz we determine a distribution function $\psi_{M_1}$ that minimizes a functional, the entropy $H$, under the constraint that it reproduces the lower order moments,
\begin{equation}
\label{eq:M1Moments}
  \int_{-1}^1 \psi_{M_1}d\mu = \psi^{(0)} \quad\textrm{and}\quad \int_{-1}^1 \mu\psi_{M_1}d\mu = \psi^{(1)}.
\end{equation}
If the entropy is
$$
H(\psi) = \int_{-1}^1 \psi\log\psi -\psi d\mu
$$
then the minimizer can formally be written as \cite{Min78}
\begin{equation}
	\psi_{M_1}(\mu) = e^{\alpha + \beta \mu}.
\end{equation}
This is a Maxwell-Boltzmann type distribution and $\alpha$, $\beta$ are Lagrange multipliers enforcing the constraints.

It is not possible to express the highest moment $\psi^{(2)}$ explicitly in terms of $\psi^{(0)}$ and $\psi^{(1)}$, but we can write the flux function as
\begin{equation}
 	\psi^{(2)} = \chi\left(\frac{\psi^{(1)}}{\psi^{(0)}}\right)\psi^{(0)}.
\end{equation}
The so-called Eddington factor $\chi$ can be computed numerically, see \cite{Min78,BruHol01}.

%For boson entropy one can even find a closed form for the Eddington factor \cite{DubFeu99}:
%\begin{align*}
%\chi = \cfrac{5-2\sqrt{4-3\left(\cfrac{\psi^{(1)}}{\psi^{(0)}}\right)^2}}{3}
%\end{align*}

%%%%%%%%%%%%%%%%%%%%%%%%%%%%
\subsection{Kershaw Closure}\label{sec:kershaw}
%%%%%%%%%%%%%%%%%%%%%%%%%%%%
%
We now want to develop another closure strategy which we call Kershaw closure. The key idea of Kershaw in \cite{Ker76} is to derive a closure which preserves the realizability conditions.  First we recall that on the boundary of the realizability domain the distribution function must be atomic, that is a linear combination of Dirac delta functions. Since the realizable set $\mathcal{R}_n$ is a convex cone one can linearly interpolate between the boundary distributions to find a solution which realizes all moments. The resulting distribution is therefore exact on the boundary of the realizable set. We additionally want it to be exact in the equilibrium limit, where $\psi$ is independent on the angular variable.

%To introduce this ansatz, we use the realizability results from \cite{Monreal}.
We use the $n = 2$ case to introduce this method.  Here the realizability conditions are 
$$
\psi^{(0)}\geq 0,\quad -\psi^{(0)} \leq \psi^{(1)}\leq \psi^{(0)},\quad (\psi^{(1)})^2\leq \psi^{(0)}\psi^{(2)}\leq \psi^{(0)}.
$$
If we define the normalized moments $\phi^{(j)} := \cfrac{\psi^{(j)} }{ \psi^{(0)}}$ the above conditions can be rewritten as
$$
-1 \leq \phi^{(1)}\leq 1,\quad (\phi^{(1)})^2\leq \phi^{(2)}\leq 1.
$$
Thus the relative moment $\phi^{(2)}$ is bounded from above and below by two functions depending on $\phi^{(1)}$.
The distribution on the lower second order realizability boundary ($\phi^{(2)} = (\phi^{(1)})^2$) is given by 
\begin{align*}
\psi_{low} = \psiv{0}\delta\left(\phi^{(1)}-\mu\right)
\end{align*}
while the upper second order realizability boundary distribution ($\phi^{(2)}=1$) is given by
\begin{align*}
\psi_{up} = \psiv{0}\left(\frac{1-\phi^{(1)}}{2}\delta(1+\mu) + \frac{1+\phi^{(1)}}{2}\delta(1-\mu)\right)
\end{align*} 
By linearity of the problem, every convex combination $\psi_{K_1} = \alpha\psi_{low} + (1-\alpha)\psi_{up}$, $\alpha\in[0,1]$ will reproduce all first moments and satisfies $\psi_{K_1}\geq 0$.  We choose $\alpha$ so that our closure is exact for moments of the constant distribution $\psi_{const} \equiv \psiv{0}/2$. Calculating moments up to order $2$ of $\psi_{const}$ gives normalized moments $(\phi^{(1)},\phi^{(2)}) = (0,\frac{1}{3})$. Plugging this into the equations above we end up with
\begin{align*}
\phi^{(2)} = \alpha (\phi^{(1)})^2+(1-\alpha) \stackrel{\phi^{(1)}=0}{=} (1-\alpha)\cdot 1 \stackrel{!}{=} \frac{1}{3}
\end{align*}
This implies $\alpha = \frac{2}{3}$. Altogether we obtain
$$
\phi^{(2)} = \frac13(2(\phi^{(1)})^2+1).
$$
This relation is demonstrated in \figref{fig:K1example}. The value of $\phi^{(2)}$ on the upper boundary (dashed black line), $\phi^{(2)}_{up} = 1$, is convexly combined with its value on the lower boundary (blue solid line),  $\phi^{(2)}_{low} = \left(\Phiv{1}\right)^2$, to obtain the realizable convex-combination (red dashed-dotted line) which interpolates the equilibrium point $\left(0,\frac13\right)$ (red point).
\InsertFig{
         \includegraphics[scale=0.5]{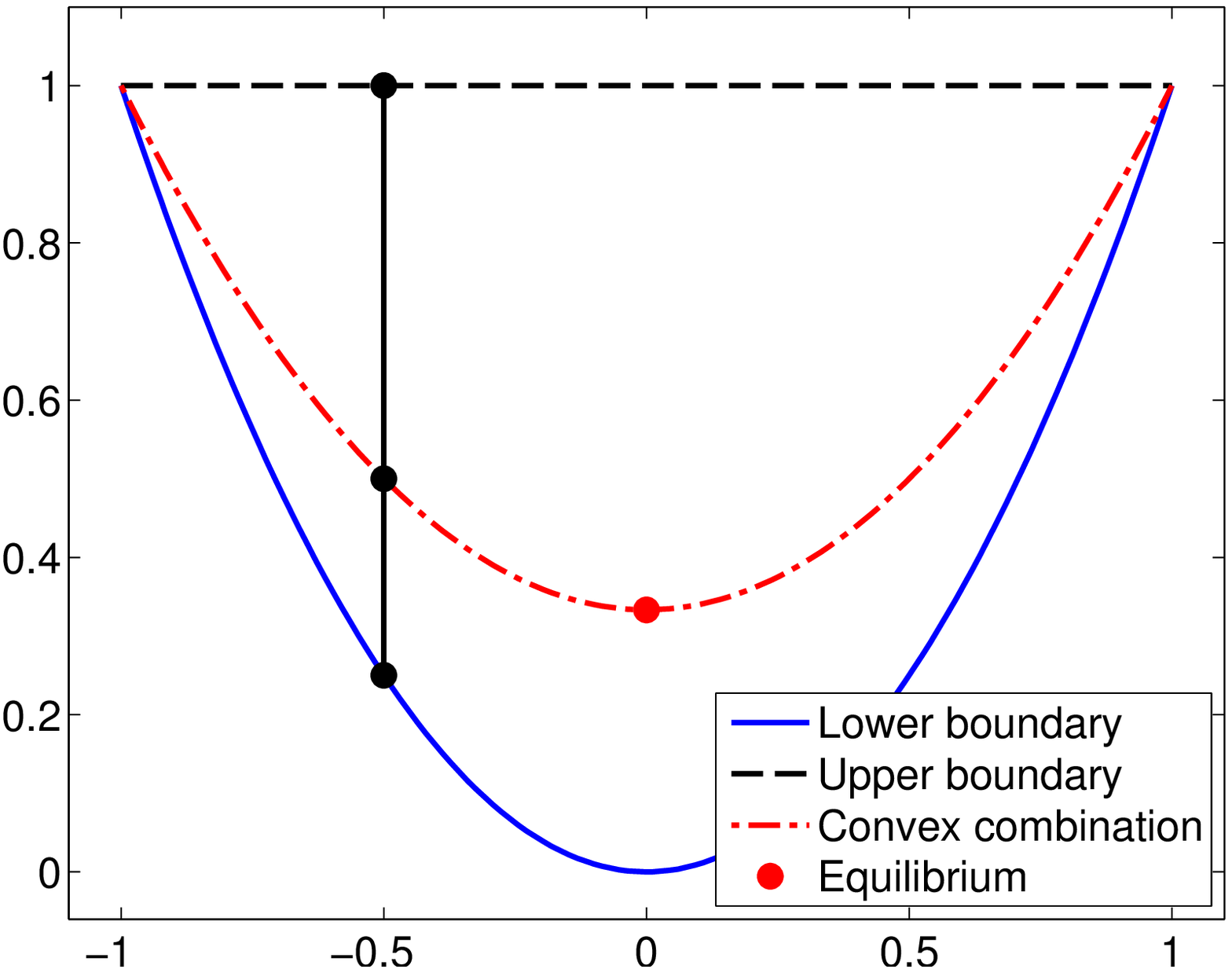}
         
}{Construction of the K${}_1$ closure using a convex combination of the upper (dashed black) and lower (blue solid) boundary second moments.}{\label{fig:K1example}}

We call this model the Kershaw $K_1$ closure. Note that using this approach it is not possible to provide point-values of the distribution. It is, however, possible to derive this class of models without the explicit representation of the distributions on the upper and lower boundaries \cite{Monreal}.

%%%%%%%%%%%%%%%%%%%%%%%%%%%%
\subsection{Quadrature method of moments}
\label{sec:QMOM}
%%%%%%%%%%%%%%%%%%%%%%%%%%%%
         
Similar ideas are also known in hydrodynamics and other fields as the Quadrature Method of Moments (QMOM) \cite{McG97,DesFoxVil08,Fox08,Fox09,YuaLauFox12}. These ideas have been recently applied to radiative transfer \cite{VikHauWanFox13}. The general idea is to use an $N$-atomic discrete measure 
$$
\psi_{QMOM_N} = \sum\limits_{i=1}^{N}\alpha_i\delta(\mu-\mu_i)
$$ 
where $N$ is a fixed number that may be independent of the order $n$. Typically one uses $n=2N-1$ to obtain $2N$ equations and $2N$ unknowns ($N$ weights and abscissas). Plugging $\psi_{QMOM_N}$ into the moment problem \eqref{eq:TruncMoment} yields a nonlinear system which can be solved using the so-called Wheeler-algorithm \cite{McGraw1997,VikHauWanFox13} which diagonalizes a tridiagonal matrix to find the weights and abscissas. This results in a robust and efficient algorithm for the inversion of the moment problem.
Additionally a major advantage of the QMOM approach is the fact that it can correctly reproduce moments which lie on the realizability boundary \cite{VikHauWanFox13} since there the distribution is (uniquely) atomic \cite{Ker76}. 
%Note that this is (at least numerically) not possible for minimum entropy models since the numerical quadratures which are necessary here \cite{AllHau12} will fail when the exponential function is steepened to a Dirac delta in the limit.

Although the original $N-$node QMOM, tracking $2N$ moments, is very efficient and has a lots of advantages it has also some drawbacks \cite{VikHauWanFox13}: The reconstructed higher moments will always lie on the (higher order) realizability boundary which immediately implies that it is not possible to correctly reproduce the equilibrium distribution. However, numerical experiments show \cite{VikHauWanFox13} that the equilibrium limit can be captured with EQMOM (Extended QMOM). Another result of $\psi_{QMOM_N}$ being on the $n+1$-th order realizability boundary is that the eigenvalues of the Jacobian of the flux are not unique giving only weak hyperbolicity of the system of partial differential equations. Minimum entropy models, on the other hand, exhibit strict hyperbolicity everywhere in the interior of the realizability domain. It is also not possible to obtain point-wise values of the distribution function itself. 

%If one allows more than one delta here it is possible to enforce this equilibrium interpolation as it is done for the Kershaw closures in Section \ref{sec:kershaw}.

%As an example consider again the M${}_1$ problem \eqref{eq:M1Moments}. The typical QMOM approach would be $\psi_{QMOM_1} = \alpha\delta(\mu-\tilde{\mu})$ where the node $\mu$ and weight $\alpha$ need to be chosen in such a way that \eqref{eq:M1Moments} is satisfied. Thus one solves the system
%\begin{align*}
%\alpha &= \psiv{0},~~~\alpha\tilde{\mu} = \psiv{1}
%\end{align*}
%Thus we have $\psi_{QMOM_1} = \psiv{0}\delta(\mu-\Phiv{1})$. But this gives in the closure $\Phiv{2} = \Phiv{1}^2$ which cannot interpolate the equilibrium point $(\phi^{(1)},\phi^{(2)}) = (0,\frac{1}{3})$.

%%%%%%%%%%%%%%%%%%%%%%%%%%%%
\subsection{Half Moment Approximation}
\label{sec:HMMinimumEntropy}
%%%%%%%%%%%%%%%%%%%%%%%%%%%%
%
A model which has been successfully applied to radiative transfer in one dimension, removing some drawbacks of the minimum entropy model, is the half-moment approximation~\cite{SchFraPin04,TurFraDubKla04}. A typical disadvantage of the minimum entropy solution can be seen in the numerical section, for example in Figure \ref{fig:2Beams1}. The idea of half-moment models is to average not over all directions but over certain subsets, for example the sets of particles moving left and those moving right. In one dimension, this means to integrate over $[-1,0]$ and $[0,1]$ respectively. We denote the half moments by
\begin{equation}
	\psi_+^{(j)}(t,x) = \int_0^1 \mu^j \psi(t,x,\mu) d\mu\quad\textrm{and}\quad \psi_-^{(j)}(t,x) = \int_{-1}^0 \mu^j\psi(t,x,\mu) d\mu.
\end{equation}
Applying this approach to the Fokker-Planck equation we obtain,
\begin{equation}
	\partial_t \psi_+^{(0)} + \partial_x \psi_+^{(1)} + \sigma_a \psi_+^{(0)} = \frac{T}{2} \int_0^1 \partial_\mu(1-\mu^2)\partial_\mu \psi d\mu  + Q_+^{(0)}
\end{equation}	
If we use integration by parts, the integral on the right hand side becomes
\begin{equation}
\int_0^1 \partial_\mu(1-\mu^2)\partial_\mu \psi d\mu = -\partial_\mu \psi(0^+).
\end{equation}
Similarly,
\begin{equation}
\int_{-1}^0 \partial_\mu(1-\mu^2)\partial_\mu \psi d\mu = \partial_\mu \psi(0^-).
\end{equation}
We note that, in contrast to the spherical harmonics approach, on the right hand side a microscopic term, namely a value of the distribution itself instead of its moments, appears. A naive approach would be to use entropy minimization on each half space separately. Thus we would use the discontinuous ansatz
\begin{equation}
	\psi_{HM_1}(\mu) = \begin{cases} e^{\alpha_-+\beta_- \mu} &\text{for}\quad \mu\in[-1,0]\\ e^{\alpha_++ \beta_+ \mu} &\text{for}\quad \mu\in[0,1]\end{cases}
\end{equation}
to close both the flux and the right hand side. This, however, would violate the important conservation property
$$
\int_{-1}^1 \partial_\mu(1-\mu^2)\partial_\mu \psi d\mu = 0
$$ 
of the Fokker-Planck equation resulting in a wrong approximation of the original equation \cite{Frank07}. One would therefore have to model the boundary terms differently. This is similar to the problems one encounters when the Discontinuous Galerkin method is applied to diffusion equations, see e.g. \cite{Bassi1997,Cockburn1998}. Similar interface terms appear that have to be approximated carefully. The deeper theoretical reason for these problems is that the domain of definition of the Laplace-Beltrami operator defines a continuous symmetric bilinear form $$
a(\lambda,\psi) :=-\int_{-1}^1 (1-\mu^2)\partial_\mu \lambda(\mu)\partial_\mu\psi d\mu $$ on the space $H^1([-1,1],\mathbb{R})$ \cite{Trofimov2002}. The method of moments is a Galerkin method, which should use subspaces of the domain of definition of the symmetric bilinear form as ansatz spaces. In 1D this excludes discontinuous functions. Instead of modeling the interface terms, we instead move to a conforming discretization by demanding continuity of the ansatz.

\subsection{A mixed moment method}
\label{sec:NEW}
%%%%%%%%%%%%%%%%%%%%%%%%%%%%%%%%%%%%%%%%%%%%%%%%%%%%%%%%%%%%%
%
%The half moment model was designed in the context of radiative transfer to remove the drawbacks of M${}_1$ and the P${}_n$ models.
%%For the case of equations with integral terms the Half Moment Minimum Entropy Model guarantees positivity, adapts its speed of propagation and eliminates the possibility of unphysical discontinuities as in the full moment minimum entropy model. The Half Moment $P_N$ model is  an improvement over the $P_N$ model but lacks the correct speed of propagation and does not guarantee positivity.
%However, in the present Fokker-Planck case, the numerical results show that the half moment model (both with polynomial and minimum entropy closure) fails dramatically \cite{Frank07}. One hint at this failure is the appearance of the terms $\partial_\mu \psi(0)$ in the derivation, which are depending on the distribution function and not directly on the moments of this function. A second indication that the Half Moment Model above is not a good approximation to the Fokker-Planck equation is that the ``+'' and ``-'' equations decouple. Thus the model cannot be a consistent discretization of the Fokker-Planck equation. The mathematical reason for the failure is that a discontinuous ansatz function does not reflect the smoothness of the solution, which is imposed by the angular diffusion operator. 

To impose continuity, we test \eqref{eq:FokkerPlanck1D} with $1$ and integrate over $[-1,1]$ and then with $\mu$ and integrate over $[-1,0]$ and $[0,1]$ separately. One obtains balance equations of the form
\begin{align}
	 \partial_t \psi^{(0)} + \partial_x (\psi_+^{(1)}+\psi_-^{(1)}) + \sigma_a \psi^{(0)}  &= Q^{(0)}\\
	 \partial_t \psi_+^{(1)} + \partial_x \psip{2}   + \sigma_a \psi_+^{(1)} &=\frac{T}{2}(\psi(0^+)-2\psi_+^{(1)}) + Q+^{(1)} \\
	 \partial_t \psi_-^{(1)} + \partial_x \psim{2}   + \sigma_a \psi_-^{(1)}&= \frac{T}{2}(-\psi(0^-)-2\psi_-^{(1)})+ Q_-^{(1)}.
\end{align}
We close this system with an underlying distribution that is continuous, because it is in the domain of definition of the Laplace-Beltrami operator, and will thus also preserve the conservation property of the Fokker-Planck equation. This happens automatically if we use the minimum-entropy principle constrained by	 the zeroth full moment and the two half moments of first-order. In this case, the mixed minimum-entropy (MME) ansatz is \cite{Frank07}
\begin{equation}
	\psi_{MME}=\begin{cases} e^{\alpha +\beta_+ \mu}, & \mu\in [0,1], \\ e^{\alpha +\beta_- \mu}, & \mu\in [-1,0]. \end{cases}
\end{equation}
A complete hierarchy of mixed minimum entropy methods of arbitrary order (see Section \ref{sec:MMn}) follows, so we call this model MM${}_1$. Here, one could also consider a linear closure, called the mixed moment MP${}_n$ closure, but this has the drawback that it allows for negative energies and does not adapt to the correct speed of propagation, just as with the full moment P${}_n$ model. 

%We note that, for mixed moment approximations  moment realizability is an issue which  has not been discussed in the previous work. It  will be considered in detail in the next section for a general number of moments.

Again, the system cannot be closed analytically, but the second moments can be written in the form
\begin{align*}
\psi_\pm^{\left(2\right)} = \chi_\pm\left(\frac{\psi_-^{(1)}}{\psi^{(0)}},\frac{\psi_+^{(1)}}{\psiv{0}}\right)\psiv{0}
\end{align*}
%\begin{align*}
% \partial_t \psi^{(0)} + \partial_x (\psi_+^{(1)}+\psi_-^{(1)})  + \sigma_a \psi^{(0)}  &= Q^{(0)}\\
%	 \partial_t \psi_+^{(1)} + \partial_x \left(\chi_+\left(\frac{\psi_+^{(1)}}{\psi^{(0)}},\frac{\psi_-^{(1)}}{\psi^{(0)}}\right) \psi^{(0)}\right) + \sigma_a \psi_+^{(1)}
%	&=\frac{T}{2}(2\psi^{(0)}-5\psi_+^{(1)}+3\psi_-^{(1)}) + Q_+^{(1)}\\
% \partial_t \psi_-^{(1)} + \partial_x\left(\chi_-\left(\frac{\psi_-^{(1)}}{\psi^{(0)}},\frac{\psi_+^{(1)}}{\psi^{(0)}}\right)\psi^{(0)}\right) + \sigma_a \psi_-^{(1)}
%	&=\frac{T}{2}(-2\psi^{(0)}+3\psi_+^{(1)}-5\psi_-^{(1)})+ Q_-^{(1)}.
%\end{align*}
where the Eddington factors  $\chi_+,\chi_-$ have to be determined numerically as for the full moment ansatz.

We note that in \cite{Frank07} the mixed-moment MP${}_n$ or minimum-entropy closures have only been discussed for the lowest order case.
 
In the following we will study mixed moment problems of arbitrary order.
%Section \ref{mixed} contains a proof for the general case.
%On the other hand, the  development of higher order mixed moment closures of the balance equations is investigated in section \ref{closure}. The  resulting set of equations are numerically compared with each other and the Fokker-Planck equation in section \ref{sec:SIM}.

%%%%%%%%%%%%%%%%%%%%%%%%%%
\section{Realizability of mixed moment problems}
\label{mixed}
%%%%%%%%%%%%%%%%%%%%%%%%%%
%
In this section we state necessary and sufficient conditions for the mixed moments to be consistent with a positive distribution function. 
\begin{definition}
Given a vector of real numbers $\left(\psiv{0},\psip{1},\ldots,\psip{n},\psim{1},\ldots,\psim{n}\right)$, the \emph{truncated Hausdorff mixed-moment problem} on $[-1, 1]$ entails finding a nonnegative distribution function $\psi$ such that
\begin{subequations}
\label{eq:MRp}
\begin{align}
\int_{-1}^1 \psi(\mu)~ d\mu &= \psiv{0}\\
\intp{\mu^j \psi(\mu)} &= \psip{j}~~~~0\leq j \leq n\\
\intm{\mu^j \psi(\mu)} &= \psim{j}~~~~0\leq j \leq n
\end{align}
\end{subequations}
Furthermore we denote the realizable domain of vectors for which a solution to this problem exists by $\MR{n} \subset \mathbb{R}^{2n + 1}$.
\end{definition}

To adapt \thmref{thm:FullMomentRealizability} to our mixed-moment problem, we use a basic fact and an elementary lemma.

\begin{fact}
The mixed-moment data $\gamma$ is realizable if and only if there exist $\psip{0}$ and $\psim{0}$ such that $\psiv{0} = \psip{0}$ + $\psim{0}$, and the moments $(\psip{0}, \psip{1}, \ldots , \psip{n})$ and $(\psim{0}, \psim{1}, \ldots , \psim{n})$ are realizable under the Hausdorff conditions for $[0, 1]$ and $[-1, 0]$ respectively.
\end{fact}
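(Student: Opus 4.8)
The plan is to prove the two implications separately, exploiting the observation that the mixed-moment problem on $[-1,1]$ decouples into two ordinary truncated Hausdorff problems, one on $[0,1]$ and one on $[-1,0]$, coupled only through the single scalar constraint $\psiv{0} = \psip{0} + \psim{0}$. Once this reduction is established, \thmref{thm:FullMomentRealizability} applies verbatim on each half-interval, which is exactly how the explicit mixed-moment conditions will later be obtained. The argument is a \emph{restriction} in the forward direction and a \emph{gluing} in the reverse direction.

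For the ``only if'' direction, I would start from a representing distribution $\psi \geq 0$ solving the mixed-moment problem \eqref{eq:MRp}. Defining $\psip{0} := \int_{[0,1]} \psi~d\mu$ and $\psim{0} := \int_{[-1,0)} \psi~d\mu$ --- assigning the shared endpoint to the $+$ side --- gives $\psiv{0} = \psip{0} + \psim{0}$ by additivity. The restriction of $\psi$ to $[0,1]$ is nonnegative and reproduces $(\psip{0}, \psip{1}, \ldots, \psip{n})$, since for $j \geq 1$ the endpoint contributes nothing and $\intp{\mu^j \psi}$ equals $\psip{j}$ by definition; symmetrically, the restriction to $[-1,0]$ reproduces $(\psim{0}, \psim{1}, \ldots, \psim{n})$. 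Hence both one-sided vectors are Hausdorff-realizable on their respective intervals.

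For the ``if'' direction, I would take representing distributions $\psi_+ \geq 0$ on $[0,1]$ and $\psi_- \geq 0$ on $[-1,0]$ supplied by the two one-sided problems and glue them into a single $\psi$ on $[-1,1]$ via $\psi = \psi_-$ on $[-1,0)$ and $\psi = \psi_+$ on $[0,1]$. This $\psi$ is nonnegative, and a short verification shows it reproduces every prescribed moment: each higher half moment sees only one of the two pieces, while the full zeroth moment equals $\psip{0} + \psim{0} = \psiv{0}$ by the assumed split.

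The single delicate point --- and the only place genuine care is required --- is the endpoint $\mu = 0$ shared by the two closed half-intervals. Because $0^j = 0$ for all $j \geq 1$, a Dirac mass located at $\mu = 0$ contributes nothing to any higher moment and affects only the zeroth moment, so its weight may be assigned freely to either side. This is precisely the ambiguity absorbed by the existentially quantified split $\psiv{0} = \psip{0} + \psim{0}$ in the statement, rather than being fixed by the data. In the atomic case one therefore assigns the endpoint consistently to exactly one half-interval, which avoids double-counting and keeps $\psiv{0} = \psip{0} + \psim{0}$ intact; everything else is routine.
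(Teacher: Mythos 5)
Your proposal is correct and matches the paper's (implicit) argument: the paper states this as a Fact without proof, but the same restriction/concatenation reasoning appears verbatim in the proof of Theorem~\ref{thm:MMRealizability}, where $\psipm{0}$ are defined as the half-interval integrals of a representing $\psi$ in one direction and the two half-interval densities are ``concatenated'' in the other. Your additional observation that an atom at $\mu=0$ contributes only to the zeroth moment, so its weight can be split arbitrarily between $\psip{0}$ and $\psim{0}$, is a correct and worthwhile clarification of why the split is existentially quantified.
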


The lemma we use appears in a slightly different form as Lemma 2.3 in \cite{CurFial91}. We can use it to show that, according to the realizability conditions in \thmref{thm:FullMomentRealizability}, the moments of order $1, \ldots, n$ for each half interval $[0, 1]$ and $[-1, 0]$ define lower bounds for the quantities $\psip{0}$ and $\psim{0}$ respectively.%
\footnote{
If we consider the more general truncated mixed moment Hausdorff problem over two adjacent intervals $[\mu_-, \mu_0]$ and $[\mu_0, \mu_+]$ (instead of specifically $[-1, 0]$ and $[0, 1]$), depending on the signs of $\mu_-$, $\mu_0$, and $\mu_+$, the conditions of \lemref{lem:pd-ext} can also lead to upper bounds on $\psiv{0}$.  This does not occur when $\mu_0 = 0$, as in our case.
} %
Below we use $\range{M}$ to indicate the linear space spanned by the columns of the matrix $M$.

\begin{lemma}
\label{lem:pd-ext}
Let $A \in \R^{(k + 1) \times (k + 1)}$ be symmetric, $C \in \R^{k \times k}$ be symmetric, $\bb \in \R^k$, and $a \in \R$ so that
$$
A = \left(\begin{array}{cc} a & \bb^T \\ \bb & C \end{array}\right).
$$
\begin{itemize}
 \item[(i)] If $A \ge 0$, then $C \ge 0$, $\bb \in \range{C}$, and $a \ge \bw^T C \bw$, where $\bb = C \bw$.
 \item[(ii)] If $C \ge 0$ and $\bb \in \range{C}$, then $A \ge 0$ if and only if $a \ge \bw^T C \bw$, where $\bb = C \bw$.
\end{itemize}
\end{lemma}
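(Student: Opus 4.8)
The plan is to work directly with the quadratic form of $A$ and to complete the square, handling the possibly singular block $C$ through its range rather than through an inverse. For $t \in \R$ and $\bv \in \R^k$ I set
\begin{align*}
q(t, \bv) := \left(t, \bv^T\right) A \left(t, \bv^T\right)^T = a t^2 + 2 t\, \bb^T \bv + \bv^T C \bv,
\end{align*}
so that $A \ge 0$ is the same as $q(t,\bv) \ge 0$ for all $(t,\bv)$, and each assertion will then follow by specializing $t$ and $\bv$. First I would record that whenever $\bb = C\bw$ the scalar $\bw^T C\bw = \bb^T\bw$ does not depend on the choice of $\bw$: any two solutions differ by a vector $\bz \in \ker C$, and $C\bz = 0$ makes the resulting cross terms vanish. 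This gives meaning to $\bw^T C\bw$ in the statement even when $C$ is singular.

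For part (i), assume $A \ge 0$. Setting $t = 0$ gives $q(0,\bv) = \bv^T C\bv \ge 0$ for all $\bv$, hence $C \ge 0$. For the range condition I would argue by contradiction: since $C$ is symmetric, $\range{C} = (\ker C)^\perp$, so if $\bb \notin \range{C}$ there is $\bz \in \ker C$ with $\bb^T\bz \ne 0$, and then $q(1, s\bz) = a + 2s\,\bb^T\bz$ is unbounded below as $s$ ranges over $\R$, contradicting $A \ge 0$. Finally, writing $\bb = C\bw$ and taking $\bv = -t\bw$ collapses the form, via $\bb^T\bw = \bw^T C\bw$, to $q(t,-t\bw) = t^2\left(a - \bw^T C\bw\right)$; nonnegativity at $t = 1$ then yields $a \ge \bw^T C\bw$.

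For part (ii) the forward implication is exactly this last computation. The reverse rests on the completing-the-square identity, valid whenever $\bb = C\bw$,
\begin{align*}
q(t,\bv) = t^2\left(a - \bw^T C\bw\right) + \left(\bv + t\bw\right)^T C\left(\bv + t\bw\right).
\end{align*}
Under the hypothesis $a \ge \bw^T C\bw$ the first term is nonnegative because $t^2 \ge 0$ and the second because $C \ge 0$, so $q \ge 0$ everywhere and $A \ge 0$.

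I expect the only genuine obstacle to be the singularity of $C$ — exactly the situation that arises on the boundary of the realizability domain. It forces two departures from the textbook Schur-complement statement: the complement $a - \bb^T C^{-1}\bb$ must be replaced by $a - \bw^T C\bw$ with $\bb = C\bw$, and the hypothesis $\bb \in \range{C}$ must be carried throughout (as a conclusion in (i), as a standing assumption in (ii)). Once the well-definedness of $\bw^T C\bw$ is settled as above, the remaining work is only the short specializations of $q$ and the single displayed identity, so nothing else is delicate.
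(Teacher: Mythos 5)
Your proof is correct: the well-definedness of $\bw^T C \bw$ via $\ker C$, the specializations $t=0$, $\bv = s\bz$, and $\bv = -t\bw$, and the completing-the-square identity all check out, and you correctly handle the singular case through the range condition rather than an inverse. Note that the paper does not prove this lemma at all --- it only cites Lemma~2.3 of Curto and Fialkow (a Smul'jan-type block-extension result), so your self-contained quadratic-form/Schur-complement argument supplies exactly the standard reasoning that the cited reference encapsulates, with no gaps.
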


\begin{remark}
Since $C$ is invertible on its range, a more explicit formula for the bound on $a$ in Lemma \ref{lem:pd-ext} can be written using the pseudo-inverse $C^\dag$ of $C$.  That is, for all $\bw$ such that $\bb = C \bw$, we have $\bw^T C \bw = \bb^T C^\dag \bb$.  Thus the bound on $a$ is well-defined even when $C$ is singular.

\end{remark}

Thus the mixed-moment data $\left(\psiv{0},\psip{1},\ldots,\psip{n},\psim{1},\ldots,\psim{n}\right)$ is realizable only if $\psiv{0}$ is large enough that it can be split up into $\psip{0}$ and $\psim{0}$ which each satisfy the lower-bounds
imposed by $\left( \psip{1}, \psip{1}, \ldots , \psip{n} \right)$ and $\left( \psim{1}, \psim{2}, \ldots , \psim{n} \right)$ respectively through Lemma \ref{lem:pd-ext} and the appropriate Hausdorff conditions.  This gives the following result.

\begin{theorem}
\label{thm:MMRealizability}
Let $B_{\pm}(k)$ and $C_{\pm}(k)$ be defined as in \thmref{thm:FullMomentRealizability} but with $\psiv{i}$ replaced by $\psipm{i}$ (respectively) for $i \ge 1$.  We also let $D(k)_{\pm} := \left( \psipm{i + j + 1} \right)_{i, j = 1}^k$ and define the mixed-moment matrices $\tilde A_{\pm}(k)$ using both the full moment $\psiv{0}$ and the partial moments by
$$
\left( \tilde A_{\pm}(k) \right)_{ij} = \begin{cases} \psiv{0} &\text{if } i = j = 0 \\
                               \psipm{i + j} &\text{otherwise} \end{cases}
 \quad i, j \in \{ 0, 1, \ldots , k \}.
$$

Then the mixed-moment data $\gamma$ is realizable if and only if
\begin{itemize}
 \item[(i)] when $N = 2k$ is even,
 \begin{subequations}
 \label{eq:mixed-even}
 \begin{align}
 & \pm B_{\pm}(k - 1) - C_{\pm}(k) \ge 0, \\
 & \tilde A_{\pm}(k) \ge 0, \label{eq:mixed-even-2} \\
 & \psiv{0} \ge \bb_+^T C_+(k)^\dag \bb_+ + \bb_-^T C_-(k)^\dag \bb_-, \label{eq:mixed-even-glue}
 \end{align}
 \end{subequations}
 where $\bb_{\pm} := \left(\psipm{1}, \ldots , \psipm{k}\right)^T$ are simply the first columns of $A_{\pm}(k)$---but omitting the top element---respectively, and $C_{\pm}(k)^\dag$ represent the pseudo-inverses of $C_{\pm}(k)$ respectively;
 \item[(ii)] when $N = 2k + 1$ is odd,
 \begin{subequations}
 \label{eq:mixed-odd}
 \begin{align}
 & \pm B_{\pm}(k) \ge 0, \\
 & \tilde A_{\pm}(k) \mp B_{\pm}(k) \ge 0,  \label{eq:mixed-odd-2}\\
 & \psiv{0} \ge \psip{1}
  + \bb_+^T \left( C_+(k) - D_+(k) \right)^\dag \bb_+ \nonumber \\
  & \qquad - \psim{1}
  + \bb_-^T \left( C_-(k) + D_-(k) \right)^\dag \bb_-, \label{eq:mixed-odd-glue}
 \end{align}
 \end{subequations}
 where here $\bb_{\pm} := \left( \psipm{1} - \psipm{2}, \ldots , \psipm{k} - \psipm{k + 1} \right)^T$ are simply the first \\ columns of $A_{\pm}(k) \mp B_{\pm}(k)$---but omitting the top element---respectively, and $\left( C_{\pm}(k) \mp D_{\pm}(k) \right)^\dag$ represent the pseudo-inverses of $C_{\pm}(k) \mp D_{\pm}(k)$ respectively.
\end{itemize}

\end{theorem}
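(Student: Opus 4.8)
The plan is to reduce the mixed-moment problem to two decoupled full Hausdorff moment problems, on $[0,1]$ and $[-1,0]$, linked only by the splitting constraint $\psiv{0} = \psip{0} + \psim{0}$, and then to eliminate the auxiliary unknowns $\psip{0}, \psim{0}$ using \lemref{lem:pd-ext}. Starting from the Fact, realizability of $\gamma$ is equivalent to the existence of nonnegative reals $\psip{0}, \psim{0}$ summing to $\psiv{0}$ such that $(\psip{0}, \psip{1}, \ldots , \psip{n})$ is Hausdorff-realizable on $[0,1]$ and $(\psim{0}, \psim{1}, \ldots , \psim{n})$ on $[-1,0]$. The key observation driving everything is that exactly one scalar per half interval, namely $\psipm{0}$, is free, and that it enters the Curto--Fialkow conditions of \thmref{thm:FullMomentRealizability} only through the top-left entry of the Hankel matrix $A_\pm(k)$.

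First I would specialize \thmref{thm:FullMomentRealizability} to each endpoint pair. For $[0,1]$ one sets $a = 0$, $b = 1$; in the even-order case the condition $(a+b)B(k-1) \ge ab\,A(k-1) + C(k)$ collapses to $B_+(k-1) - C_+(k) \ge 0$, and in the odd-order case the sandwich $bA(k) \ge B(k) \ge aA(k)$ becomes $A_+(k) - B_+(k) \ge 0$ together with $B_+(k) \ge 0$. For $[-1,0]$ one sets $a = -1$, $b = 0$, producing the mirrored inequalities $-B_-(k-1) - C_-(k) \ge 0$, respectively $A_-(k) + B_-(k) \ge 0$ and $-B_-(k) \ge 0$. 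This already yields the $\psipm{0}$-free conditions $\pm B_\pm(k-1) - C_\pm(k) \ge 0$ and $\pm B_\pm(k) \ge 0$. The only remaining conditions, $A_\pm(k) \ge 0$ and $A_\pm(k) \mp B_\pm(k) \ge 0$, are precisely those containing $\psipm{0}$, sitting in their $(0,0)$ slot.

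Next I would apply \lemref{lem:pd-ext} to each of these remaining matrices, written in block form $\left( \begin{smallmatrix} a & \bb^T \\ \bb & C \end{smallmatrix} \right)$ with $a = \psipm{0}$ or $a = \psipm{0} \mp \psipm{1}$, with $C = C_\pm(k)$ or $C = C_\pm(k) \mp D_\pm(k)$, and with $\bb = \bb_\pm$ as in the statement. Part (i) extracts the $\psipm{0}$-independent structural requirements $C \ge 0$ and $\bb \in \range{C}$, while the scalar inequality $a \ge \bb^T C^\dag \bb$ (in the pseudo-inverse form of the Remark) becomes an explicit lower bound $\psipm{0} \ge L_\pm$ on each auxiliary mass. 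A short check gives $L_\pm \ge 0$, since $\psip{1} \ge 0$, $-\psim{1} \ge 0$, and the quadratic forms are nonnegative once $C \ge 0$. Because both lower bounds are nonnegative, a feasible split $\psiv{0} = \psip{0} + \psim{0}$ with $\psipm{0} \ge L_\pm$ exists if and only if $\psiv{0} \ge L_+ + L_-$, which is exactly the gluing inequality \eqref{eq:mixed-even-glue}, respectively \eqref{eq:mixed-odd-glue} (take $\psip{0} = L_+$ and $\psim{0} = \psiv{0} - L_+ \ge L_- \ge 0$).

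Finally I would reconcile the $\psipm{0}$-free structural conditions with the matrices $\tilde A_\pm(k)$ in the theorem, which carry $\psiv{0}$ rather than $\psipm{0}$ in the corner. Since any feasible split obeys $\psiv{0} \ge \psipm{0}$, replacing the corner entry by the larger value $\psiv{0}$ adds a nonnegative multiple of $e_0 e_0^T$ and preserves positive semidefiniteness; conversely, $\tilde A_\pm(k) \ge 0$ (resp.\ $\tilde A_\pm(k) \mp B_\pm(k) \ge 0$) encodes exactly $C \ge 0$, $\bb \in \range{C}$, plus the redundant bound $\psiv{0} \ge L_\pm$ already implied by gluing. This lets me state the conditions uniformly through $\tilde A_\pm$. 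The main obstacle I anticipate is the bookkeeping around singular $C_\pm(k)$ and $C_\pm(k) \mp D_\pm(k)$: the bounds $L_\pm$ must be shown well-defined and independent of the chosen $\bw$ with $\bb = C\bw$, which is the content of the Remark following \lemref{lem:pd-ext}, and the range conditions $\bb \in \range{C}$ must be carried through both directions of the equivalence so that the pseudo-inverse expressions in \eqref{eq:mixed-even-glue} and \eqref{eq:mixed-odd-glue} remain meaningful on the realizability boundary.
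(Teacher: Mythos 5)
Your proposal is correct and follows essentially the same route as the paper: reduce to two half-interval Hausdorff problems via the Fact, isolate $\psipm{0}$ in the corner of $A_\pm(k)$ (resp.\ $A_\pm(k)\mp B_\pm(k)$), apply \lemref{lem:pd-ext} to get lower bounds whose sum yields the gluing condition, and then trade the $\psipm{0}$-dependent matrices for $\tilde A_\pm(k)$. The only differences are presentational: you spell out the endpoint specializations and the $\tilde A_\pm$ reconciliation more explicitly, and you handle both parities uniformly where the paper proves the even case and declares the odd case analogous.
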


\begin{proof}
We first prove the necessity and sufficiency of the following conditions, which follow more directly from \lemref{lem:pd-ext}:
\begin{itemize}
 \item[(i)] when $N = 2k$ is even,
 \begin{subequations}
 \label{eq:mixed-even-r}
 \begin{align}
 & \pm B_{\pm}(k - 1) - C_{\pm}(k) \ge 0, \label{eq:mixed-even-1r} \\
 & C_{\pm}(k) \ge 0, \label{eq:mixed-even-2r} \\
 & \bb_{\pm} \in \range{C_{\pm}(k)}, \label{eq:mixed-even-range} \\
 & \psiv{0} \ge \bb_+^T C_+(k)^\dag \bb_+ + \bb_-^T C_-(k)^\dag \bb_-;
  \label{eq:mixed-even-glue-r}
 \end{align}
 \end{subequations}
 
 \item[(ii)] when $N = 2k + 1$ is odd,
 \begin{subequations}
 \begin{align}
 & \pm B_{\pm}(k) \ge 0, \\
 & C_{\pm}(k) \mp D_{\pm}(k) \ge 0, \\
 & \bb_{\pm} \in \range{C_{\pm}(k) \mp D_{\pm}(k)},  \label{eq:mixed-odd-range} \\
 & \psiv{0} \ge \psip{1}
  + \bb_+^T \left( C_+(k) - D_+(k) \right)^\dag \bb_+ \nonumber \\
  & \qquad - \psim{1}
  + \bb_-^T \left( C_-(k) + D_-(k) \right)^\dag \bb_-. 
 \end{align}
 \end{subequations}
\end{itemize}

We first quickly note that both sets of conditions have the following form:  they first include the Hausdorff conditions which do not involve zeroth-order moments, followed by the conditions of Lemma \ref{lem:pd-ext}, from which the final condition for each side is summed to give a total lower bound on $\psiv{0}$.  We prove just the even case, as the proof in the odd case is analogous.
  
For necessity of \eqref{eq:mixed-even-r}, assume that we have a density $\psi$ which represents $\gamma$.  Then let $\psip{0} := \int_0^1 \psi(\mu) d\mu$ and $\psim{0} := \int_{-1}^0 \psi(\mu) d\mu$, so that clearly $\psiv{0} = \psip{0} + \psim{0}$.  Then \eqref{eq:mixed-even-1r} follows immediately from \eqref{eq:haus-even-2}.  Conditions \eqref{eq:mixed-even-2r}-\eqref{eq:mixed-even-glue-r} follow from \eqref{eq:haus-even-1} and Lemma \ref{lem:pd-ext}, where $a = \psipm{0}$, and \eqref{eq:mixed-even-glue-r} is obtained simply by summing the final conditions Lemma \ref{lem:pd-ext} for each half interval.

For sufficiency \eqref{eq:mixed-even-r}, we choose $\psip{0} = \bb_+^T C_+(k - 1)^\dag \bb_+$ and $\psim{0} = \psiv{0} - \bb_+^T C_+(k - 1)^\dag \bb_+ \ge \bb_-^T C_-(k)^\dag \bb_-$.  With these values and conditions \eqref{eq:mixed-even-2r}-\eqref{eq:mixed-even-glue-r}, we can use Lemma \ref{lem:pd-ext} to construct positive-definite matrices $A_{\pm}(k)$.  These matrices together with condition \eqref{eq:mixed-even-1r} are the Hausdorff conditions.  Therefore we have densities defined on both $[0, 1]$ and $[-1, 0]$ which we can concatenate to represent $\gamma$.

Finally, it is not hard to see that, again via \lemref{lem:pd-ext}, \eqref{eq:mixed-even-2r}-\eqref{eq:mixed-even-glue-r} are equivalent to \eqref{eq:mixed-even-2}-\eqref{eq:mixed-even-glue}
\end{proof}

\begin{example}
We examine the coupling conditions \eqref{eq:mixed-even-glue} and \eqref{eq:mixed-odd-glue} for $n=2$ and $n=3$ more explicitly:

When $n = 2$, we have $k = 1$, we have $C_{\pm}(1) = \psipm{2}$, whose pseudo-inverses are $0$ when $\psipm{2} = 0$.  Therefore
$$
\bb_+^T C_+(1)^\dag \bb_+
 = \begin{cases} \frac{ \left( \psip{1} \right)^2}{\psip{2}} &\mbox{if } \psip{2} \ne 0, \\
 0 & \mbox{otherwise}.
 \end{cases} 
$$
The range conditions \eqref{eq:mixed-even-range} are only nontrivial in the singular cases, which for $n = 2$ are when $\psip{2} = 0$ or $\psim{2} = 0$.  In these cases the range conditions require $\psipm{1} = 0$ respectively (which are consistent with the fact that here we must have %
%$\support{\left. \psi \right|_{[0, 1]}} \subseteq \{ 0 \}$ or $\support{\left. \psi \right|_{[-1, 0]}} \subseteq \{ 0 \}$
$\support{\psi} \cap [0, 1] \subseteq \{ 0 \}$ or $\support{\psi} \cap [-1, 0] \subseteq \{ 0 \}$ %
respectively, for any representing $\psi$). But in the nonsingular case, condition \eqref{eq:mixed-even-glue} reads
$$
\psiv{0} \ge \frac{ \left( \psip{1} \right)^2}{\psip{2}} + \frac{ \left( \psim{1} \right)^2}{\psim{2}}.
$$

When $n = 3$, we again have $k = 1$, and
$$
\bb_+^T \left( C_+(1) - D_+(1) \right)^\dag \bb_+
 = \begin{cases} \frac{ \left( \psip{1} - \psip{2} \right)^2}{\psip{3} - \psip{2}} &\mbox{if } \psip{3} \ne \psip{2}, \\
 0 & \mbox{otherwise}.
 \end{cases} 
$$
In the singular case the range conditions \eqref{eq:mixed-odd-range} impose $\psipm{1} = \pm \psipm{2}$ (which are consistent with the realizability conditions and the fact that here we must have %
%$\support{\left. \psi \right|_{[0, 1]}} \subseteq \{ 0, 1 \}$ or $\support{\left. \psi \right|_{[-1, 0]}} \subseteq \{ -1, 0 \}$
$\support{\psi} \cap [0, 1] \subseteq \{ 0, 1 \}$ or $\support{\psi} \cap [-1, 0] \subseteq \{ -1, 0 \}$ %
respectively for any representing $\psi$).  Thus in the nonsingular case, condition \eqref{eq:mixed-odd-glue} reads
$$
\psiv{0} \ge \psip{1} + \frac{ \left( \psip{1} - \psip{2} \right)^2}{\psip{3} - \psip{2}}
  - \psim{1} + \frac{ \left( \psim{1} + \psim{2} \right)^2}{\psim{3} + \psim{2}}.
$$

\end{example}
\begin{remark}
\label{ex:GenFunctions}
As for full moments (see Remark \ref{rem:GeneratingFunction}) atoms $\mu_{i\pm}$ for the distribution function are the roots of the generating function with $\gamma_{i} = \left(\pm 1\right)^i\Phipm{i}$ respectively. The densities $\rho_{i\pm}$ can be calculated afterwards from the corresponding Vandermonde system. The generating functions $g_\gamma(k) = \mu^k-\left(\sum\limits_{i=0}^{k-1}\varphi_i\mu^i\right) $ for $n\leq 6$ and $k=\ceil{\frac{n}{2}}$ are given by $\varphi$ in the next table:\\
\begin{tabular}{l|cc}
$k$&$2k-1$&$2k$\\
\hline
$1$&$\left(1\right)$&$\left(\frac{\gamma_2}{\gamma_1}\right)$\\
$2$&$\left(-\frac{\gamma_2 - \gamma_3}{\gamma_1 - \gamma_2},\frac{\gamma_1 - \gamma_3}{\gamma_1 - \gamma_2}\right)^T$&$\left(-\frac{- \gamma_3^2 + \gamma_2\gamma_4}{- \gamma_2^2 + \gamma_1\gamma_3},\frac{\gamma_1\gamma_4 - \gamma_2\gamma_3}{- \gamma_2^2 + \gamma_1\gamma_3}\right)^T$\\
$3$&$\begin{pmatrix}\frac{\gamma_3\gamma_4 + \gamma_3\gamma_5 + \gamma_2(\gamma_4 - \gamma_5) - \gamma_3^2 - \gamma_4^2}{\gamma_2\gamma_3 + \gamma_2\gamma_4 + \gamma_1(\gamma_3 - \gamma_4) - \gamma_2^2 - \gamma_3^2}\\ \frac{\gamma_1\gamma_4 - \gamma_2\gamma_3 - \gamma_1\gamma_5 + \gamma_2\gamma_4 + \gamma_3\gamma_5 - \gamma_4^2}{\gamma_1\gamma_4 - \gamma_2\gamma_4 + \gamma_2^2 + \gamma_3^2 - \gamma_3(\gamma_1 + \gamma_2)}\\     -\frac{\gamma_2\gamma_4 - \gamma_1\gamma_5 + \gamma_2\gamma_5 + \gamma_3(\gamma_1 - \gamma_4) - \gamma_2^2}{\gamma_1\gamma_4 - \gamma_2\gamma_4 + \gamma_2^2 + \gamma_3^2 - \gamma_3(\gamma_1 + \gamma_2)}\end{pmatrix}$&$\begin{pmatrix}
\frac{\gamma_2(- \gamma_5^2 + \gamma_4\gamma_6) - \gamma_3^2\gamma_6 - \gamma_4^3 + 2\gamma_3\gamma_4\gamma_5}{\gamma_1(- \gamma_4^2 + \gamma_3\gamma_5) - \gamma_2^2\gamma_5 - \gamma_3^3 + 2\gamma_2\gamma_3\gamma_4}\\\frac{\gamma_3^2\gamma_5 + (- \gamma_4^2 - \gamma_2\gamma_6)\gamma_3 - \gamma_1\gamma_5^2 + \gamma_2\gamma_4\gamma_5 + \gamma_1\gamma_4\gamma_6}{\gamma_5\gamma_2^2 - 2\gamma_2\gamma_3\gamma_4 + \gamma_3^3 - \gamma_1\gamma_5\gamma_3 + \gamma_1\gamma_4^2}\\  \frac{\gamma_3^2\gamma_4 - \gamma_2\gamma_4^2 - \gamma_3(\gamma_1\gamma_6 + \gamma_2\gamma_5) + \gamma_2^2\gamma_6 + \gamma_1\gamma_4\gamma_5}{\gamma_5\gamma_2^2 - 2\gamma_2\gamma_3\gamma_4 + \gamma_3^3 - \gamma_1\gamma_5\gamma_3 + \gamma_1\gamma_4^2}\end{pmatrix}$
\end{tabular}\\
\vskip 0.1cm
\end{remark}

\section{Mixed Moment Closures}
\label{closure}
In this section we derive the mixed moment MP${}_n$ closure for arbitrary order as well as the minimum entropy closure. A special discrete closure which we call the Kershaw closure is given up to second order.

\begin{remark}
We want to emphasize that all models derived here are hyperbolic (strictly inside the realizability domain). Although we do not prove it here, it is easy to see, using similar arguments as for full moments \cite{Lev96}, that mixed minimum entropy models and mixed MP${}_n$ models fulfill this property. Additionally all eigenvalues are bounded in absolute value by one.

For (mixed moment) Kershaw closures we know of no general proof for hyperbolicity but it is easy to check for each model separately.
\end{remark}

\subsection{MP${}_n$}
The mixed MP${}_n$ closure consists of a basis function which is (in every) half-space a polynomial in the angular variable $\mu$. We additionally demand the distribution function to be continuous in $\mu=0$. This results in the following ansatz:
\begin{align*}
\psi_{\text{MP}_n} = \begin{cases}
\alpha +{\sum\limits_{i=1}^{n}\beta_+^{\left(i\right)}\mu^i} & \text{for } \mu\in [0,1]\\
\alpha +{\sum\limits_{i=1}^{n}\beta_-^{\left(i\right)}\mu^i} & \text{for } \mu\in  [-1,0]
\end{cases}
\end{align*}
We use monomials here because spherical harmonics (that is, Legendre polynomials) lose their orthogonality on the half-spaces and are therefore not superior to the usual monomial basis.
% However the distribution function $\psi_{P_n}$ is not necessarily positive and will therefore not fulfil the conditions derived above.
% 
\subsection{Mixed Minimum Entropy}
\label{sec:MMn}
%In \cite{Frank07}  this model has been derived for first order to overcome the problems of the full moment minimum entropy (zero netflux and therefore unphysical shocks). As for the full moment minimum entropy one asks in the first order case for the solution which minimizes the  entropy
%\begin{align*}
%&\min\limits_{\psi}  \intv{-\psi\log\left(\psi\right)} \\
%&\text{subject to: } \left(\intv{\psi},\intp{\mu\psi},\intm{\mu\psi}\right)=\left(\psiv{0},\psip{1},\psim{1}\right)
%\end{align*}
%which has to first order the solution
%\begin{align*}
%\psi_{\text{MME}} = \begin{cases}
%\alpha e^{\bp\mu} & \text{for } \mu\in [0,1]\\
%\alpha e^{\bm\mu} & \text{for } \mu\in [-1,0]
%\end{cases}
%\end{align*}
%where $\alpha,\bp,\bm$ are given implicity by the interpolation constraint.
The general MM${}_n$ is obtained by the same entropy ansatz as for the original MME/MM${}_1$. The entropy minimizer is given by
\begin{align*}
\psi_{\text{MM}_n} = \begin{cases}
\exp\left(\alpha + \sum\limits_{i=1}^{n}\beta_+^{\left(i\right)}\mu^i \right) & \text{for } \mu\in [0,1]\\
\exp\left(\alpha +\sum\limits_{i=1}^{n}\beta_-^{\left(i\right)}\mu^i \right) & \text{for } \mu\in  [-1,0]
\end{cases}
\end{align*}
As proposed in \cite{Frank07} a tabulation is used for $n=1$ to avoid a nonlinear solution technique as Newton's method. However this seems not appropriate for larger $n$ since the tableau has dimension $2n$. Here a robust algorithm has to be applied, especially at the boundary of the realizability domain, see e.g. \cite{AllHau12} for more details on this topic for the original minimum entropy models. Additionally, this task is even more challenging for $n>2$ because in this case the integrals cannot be solved analytically. Since this is not the main topic of this paper we will only give numerical examples for $n=1$.

\subsection{Kershaw closures}
To avoid the nonlinear inversion process of the minimum entropy approach we want to construct a closure which can be computed analytically. The key idea for the mixed moments is identical to the idea for the full moments explained in Section \ref{sec:kershaw} for an introduction. Although arbitrary high orders are in principle available due to Theorem \ref{thm:MMRealizability} we only present Kershaw closures up to order $2$. As in Section \ref{sec:kershaw} convex combinations of "upper" and "lower" higher order realizability distributions  $\psi_{up},\psi_{low}$ are needed. However, finding suitable candidates which are symmetric in terms of the two halfspaces which reproduces moments on the coupling conditions \eqref{eq:mixed-even-glue} and \eqref{eq:mixed-odd-glue} is a non-trivial task and may be possible only due to intensive symbolic calculations.
\subsubsection{MK${}_1$}
We want to construct a distribution function $\psi_{MK_1}$ which is on the one hand realizable and on the other hand interpolates the equilibrium point \\$\left(\Phip{1},\Phim{1},\Phip{2},\Phim{2}\right) = \left(\frac{1}{4},-\frac{1}{4},\frac{1}{6},\frac{1}{6}\right)$.
 One idea would be to do a convex combination between the isotropic state $\psi_{const}$ and the free streaming limit calculated in Theorem \ref{thm:MMRealizability}:
 \begin{align}
 \label{eq:MK1failure}
 \psip{2} = \alpha\intp{ \mu^2 \psi_{const}} + (1-\alpha)\intp{ \mu^2 \psi_{up}}
 \end{align}
 with 
 \begin{align*}
 \psi_{up}  &= \psiv{0}\left(\Phip{1}\delta\left(1-\mu\right) -\Phim{1}\delta\left(1+\mu\right) + \left(1-\Phip{1}+\Phim{1}\right)\delta\left(\mu\right)\right)
 \end{align*}
 and $\alpha\left( \Phip{1},\Phim{1} \right)$ such that $\alpha\left(\frac14,-\frac14\right) = 1$. This works well for full moments (see e.g. \cite{Monreal}) but fails for mixed moments. In the isotropic limit $\psi\to \psi_{const}$ there is no unique description of this state in terms of available moments, that is, the problem is underdetermined. Choosing e.g. $\psi_{const} = \frac{1}{2}\psiv{0} = 2\psip{1} = -2\psim{1}$ gives different closures in \eqref{eq:MK1failure}. Correspondingly, different choices of $\psi_{const}$ result in systems with different eigenvalues for the Jacobian of the flux function. Choosing naively $\alpha = 1$ and $\psipm{2} = \cfrac{\psiv{0}}{6}$ in the first example and $\psipm{2} = \pm\cfrac{2\psipm{1}}{6}$ in the second example gives system matrices
 \begin{align*}
 M_1 = \begin{pmatrix}
 0 &1 &1\\0 &\frac{2}{3} &0\\0&0&-\frac{2}{3}\end{pmatrix}~~~~~~~~M_2 = \begin{pmatrix}
 0 &1 &1\\\frac{1}{6} &0 &0\\\frac{1}{6}&0&0\end{pmatrix}.
 \end{align*}
 which have the following set of eigenvalues: $\lambda_1 = \left(-\frac{2}{3},0,\frac{2}{3}\right)$ and $\lambda_2 = \left(-\sqrt{\frac{1}{3}},0,\sqrt{\frac{1}{3}}\right)$.
  \\

To overcome this problem we use again, as in Section \ref{sec:kershaw} for the K${}_1$ model, a convex combination of the upper and lower realizability boundary for $n=2$. 
\begin{align*}
\psi_{low}  &= \psiv{0}\left(\frac{\Phip{1}}{\Phip{1}-\Phim{1}}\delta\left(\mu-\left(\Phip{1}-\Phim{1}\right)\right) -\frac{\Phim{1}}{\Phip{1}-\Phim{1}}\delta\left(\mu+\left(\Phip{1}-\Phim{1}\right)\right)\right)
\end{align*}
A convex combination of those implies second moments
\begin{align*}
\Phipm{2} &= \pm\alpha\Phipm{1} \pm \left(1-\alpha\right) \Phipm{1} \left(\Phip{1}-\Phim{1}\right)
\end{align*}
Since the equilibrium point should be interpolated we can conclude $\alpha = \frac{1}{3}$. However, every non-constant solution $0\leq \alpha\left(\Phip{1},\Phim{1}\right)\leq 1$ with $\alpha\left(\frac14,-\frac14\right) = \frac16$ would be appropriate since at the lower order boundaries the two distributions coincide.\\

In \figref{fig:fluxDiff} these lower (\ref{fig:MK1LowerBoundary}) and upper (\ref{fig:MK1UpperBoundary}) boundary representations of $\Phip{2}$ are shown, as well as the normalized positive second moment $\Phip{2}$ of the MK${}_1$ model which is then compared with the one calculated for the mixed minimum entropy solution. Note that for $\Phim{2}$ the result is just mirrored along the bisecting line of this triangle.

\InsertFig{\subfloat[Lower boundary representation]{
         \includegraphics[width = 0.45\textwidth]{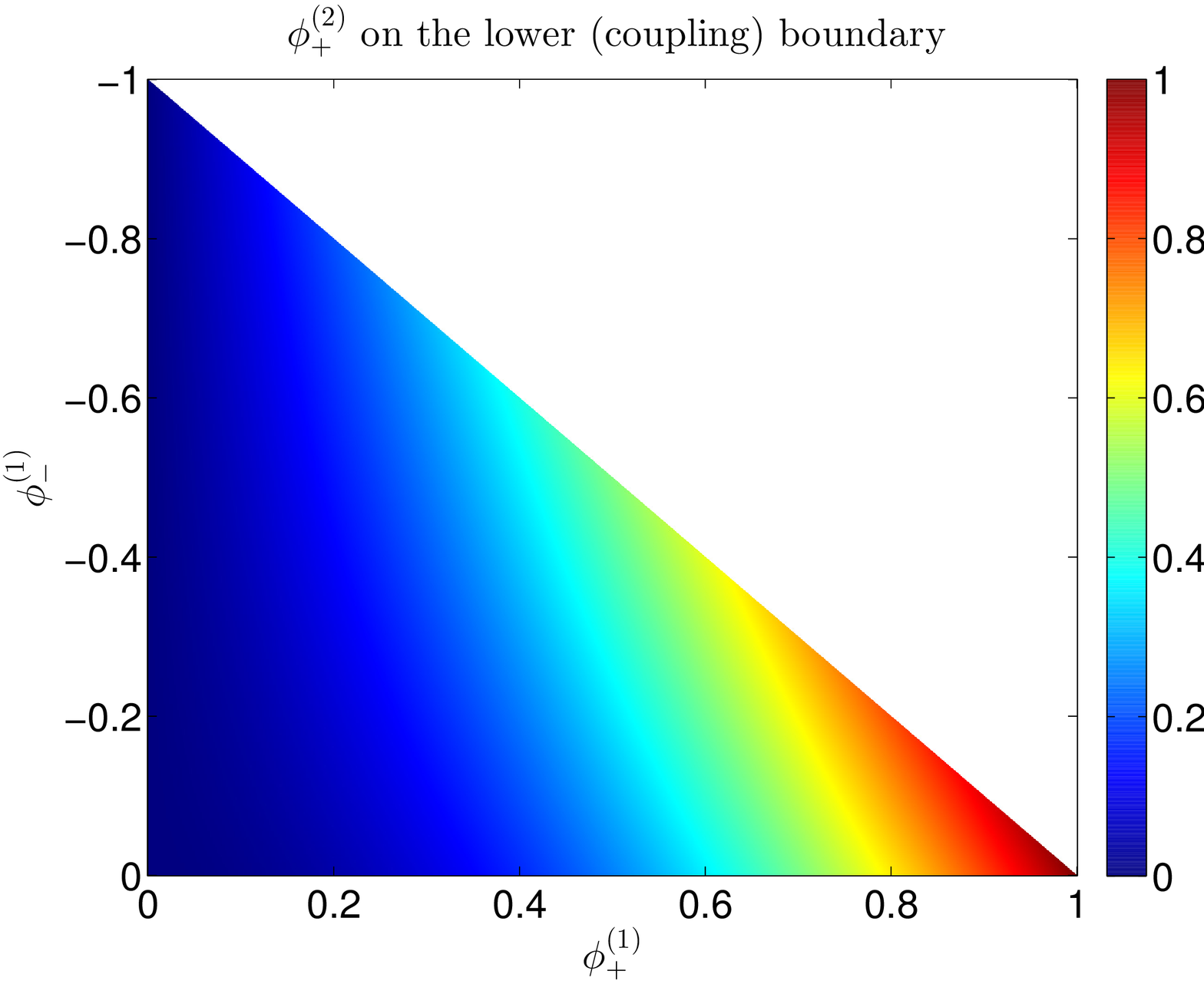}
         \label{fig:MK1LowerBoundary}
         }
\subfloat[Upper boundary representation]{
         \includegraphics[width = 0.45\textwidth]{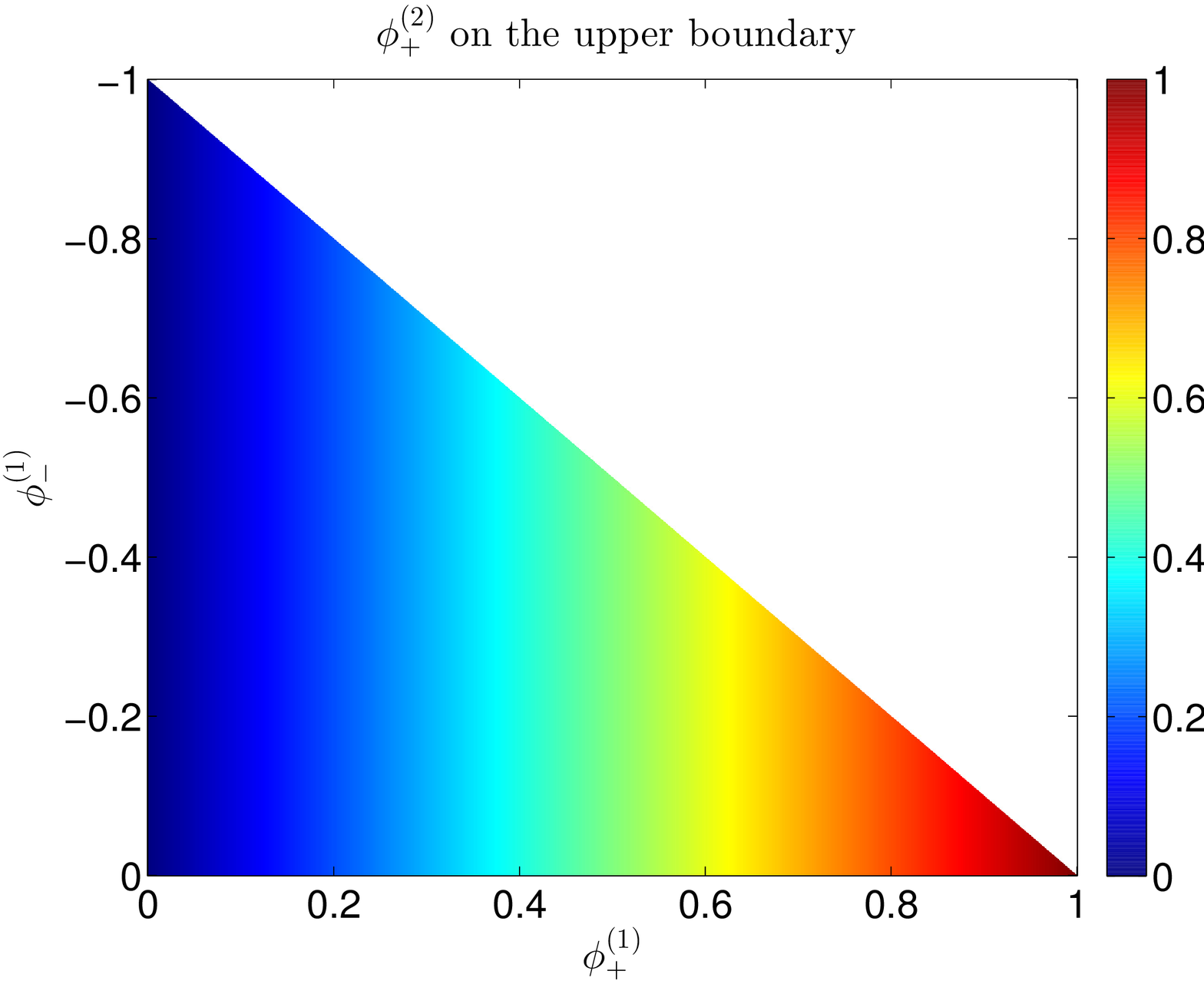}
         \label{fig:MK1UpperBoundary}}
         
\subfloat[MK${}_1$-Flux]{
         \includegraphics[width = 0.45\textwidth]{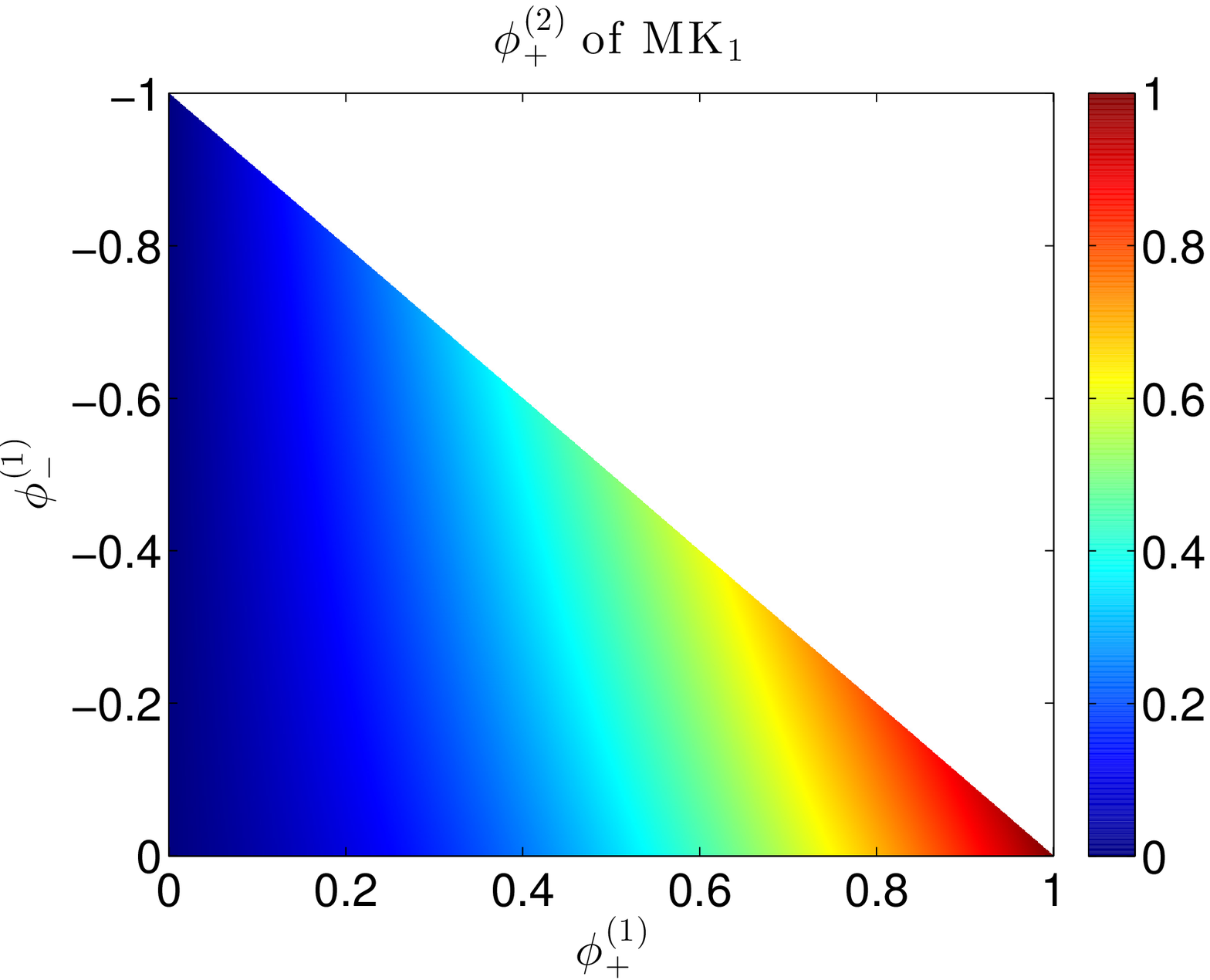}
         \label{fig:MK1Flux}
         }
\subfloat[Difference of MM${}_1$ and MK${}_1$]{
         \includegraphics[width = 0.45\textwidth]{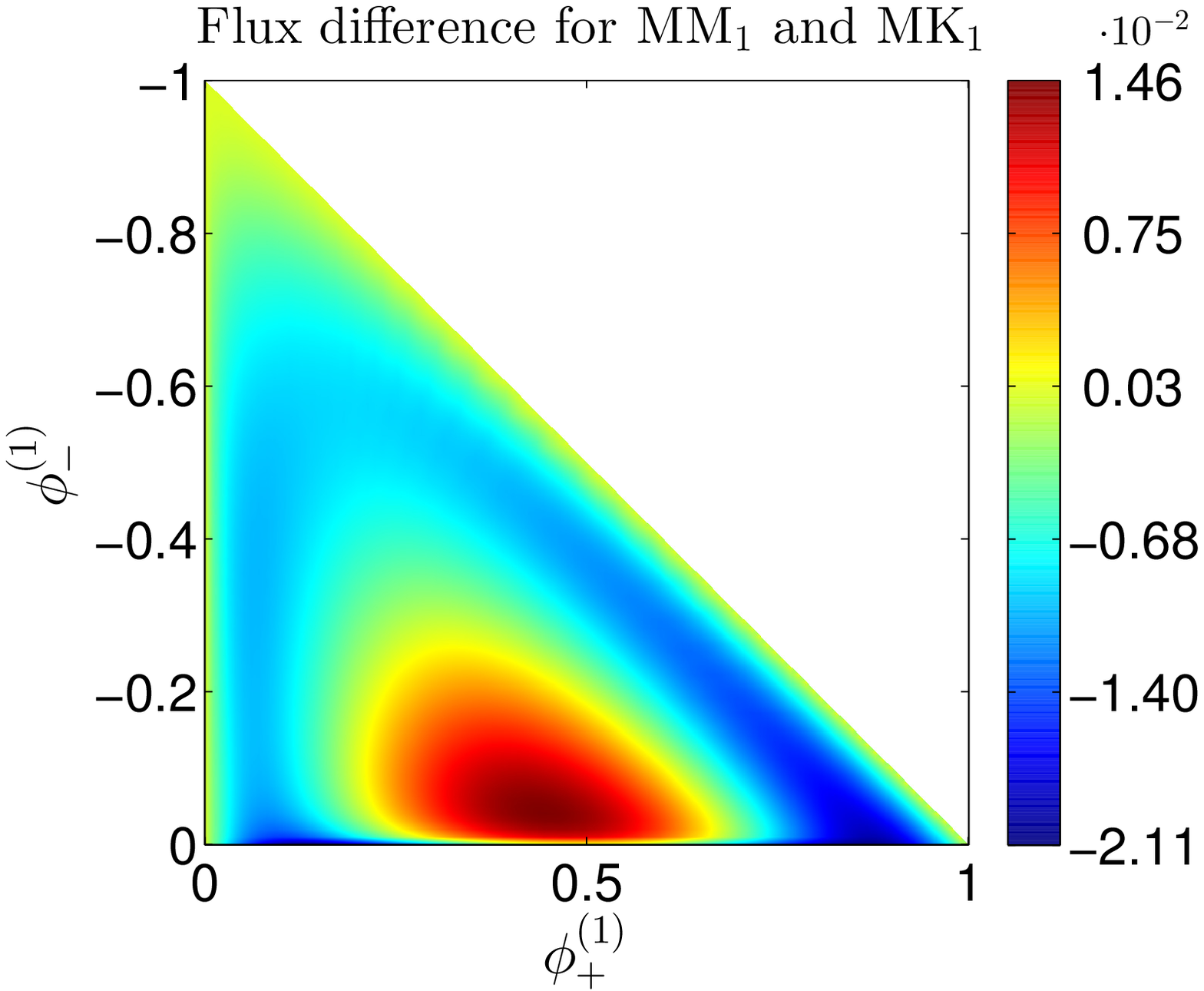}
         \label{fig:FluxDifference}}
         }{$\Phip{2} = \Phip{1} - \Phim{1}$ (\ref{fig:MK1LowerBoundary}), $\Phip{2} = \Phip{1}$ (\ref{fig:MK1UpperBoundary}), $\Phip{2}_{\text{MK}_1}$ and $\Phip{2}_{\text{MM${}_1$}}-\Phip{2}_{\text{MK}_1}$.}{\label{fig:fluxDiff}} 
\subsubsection{MK${}_2$}
As before we are looking for two distribution functions which lie on the lower and upper realizability boundary of order $3$, respectively. The corresponding realizability conditions are
\begin{subequations}
\begin{gather}
\frac{\Phipm{2}^2}{\pm\Phipm{1}} \leq \pm \Phipm{3} \leq \Phipm{2}-\cfrac{\left(\pm\Phipm{1}-\Phipm{2}\right)^2}{1\mp\Phipm{1}}\\
\Phip{1}+\cfrac{\left(\Phip{1}-\Phip{2}\right)^2}{\Phip{2}-\Phip{3}}-\Phim{1}+\cfrac{\left(-\Phim{1}-\Phim{2}\right)^2}{\Phim{2}+\Phim{3}}\leq 1
\end{gather}
\end{subequations}

We observe that
\small\begin{align*}
\psi_{low} = \psiv{0}\left(\frac{\Phip{1}^2}{\Phip{2}}\delta\left(\mu-\frac{\Phip{2}}{\Phip{1}}\right)+\frac{\Phim{1}^2}{\Phip{2}}\delta\left(\mu-\frac{\Phim{2}}{\Phip{1}}\right)+\left(1-\frac{\Phip{1}^2}{\Phip{2}}-\frac{\Phim{1}^2}{\Phim{2}}\right)\delta\left(\mu\right)\right)
\end{align*}\normalsize
reproduces
\begin{align*}
{\psiv{3}_\pm}_{low} = \int{\mu^3\psi~d\mu} = \psiv{0}\frac{\Phipm{1}^2}{\Phipm{2}}\left(\frac{\Phipm{2}}{\Phipm{1}}\right)^3 = \psiv{0}\frac{\Phipm{2}^2}{\Phipm{1}}
\end{align*}
Since the formula for $\psi_{up}$ is very complicated we only state the generated normalized third-order moments which satisfy the upper boundary/coupling conditions symmetrically in terms of the half-intervals:
\begin{align*}
\Phip{3}_{up} &=\Phip{2} - \frac{{\left(\Phip{1} - \Phip{2}\right)}^2}{\Phim{1} - \Phip{1} - \frac{{\Phim{1}}^2 + \Phim{2}\, \Phim{1}}{\Phim{2} + \frac{\Phip{2}\, {\Phim{1}}^2 + \Phim{2}\, {\Phip{1}}^2 - \Phim{2}\, \Phip{2}}{\Phip{2}\, \left(\Phim{1} - \Phip{1} + 1\right)}} + 1}\\
\Phim{3}_{up} &= \frac{{\left(\Phim{1} + \Phim{2}\right)}^2}{\Phim{1} - \Phip{1} + \frac{\Phip{1}\, \Phip{2} - {\Phip{1}}^2}{\Phip{2} + \frac{\Phip{2}\, {\Phim{1}}^2 + \Phim{2}\, {\Phip{1}}^2 - \Phim{2}\, \Phip{2}}{\Phim{2}\, \left(\Phim{1} - \Phip{1} + 1\right)}} + 1} - \Phim{2}.
\end{align*}
These moments satisfy the third order coupling condition with equality.\\
In the equilibrium point we have $\Phipm{3} = \pm\frac{1}{8}$. Therefore
\begin{align*}
\Phip{3} &= \alpha\frac{\Phip{2}^2}{\Phip{1}} + \left(1-\alpha\right)\left(\Phip{2} - \frac{{\left(\Phip{1} - \Phip{2}\right)}^2}{\Phim{1} - \Phip{1} - \frac{{\Phim{1}}^2 + \Phim{2}\, \Phim{1}}{\Phim{2} + \frac{\Phip{2}\, {\Phim{1}}^2 + \Phim{2}\, {\Phip{1}}^2 - \Phim{2}\, \Phip{2}}{\Phip{2}\, \left(\Phim{1} - \Phip{1} + 1\right)}} + 1}\right)\\
\Phim{3} &= \alpha\frac{\Phim{2}^2}{\Phim{1}}+\left(1-\alpha\right)\left(\frac{{\left(\Phim{1} + \Phim{2}\right)}^2}{\Phim{1} - \Phip{1} + \frac{\Phip{1}\, \Phip{2} - {\Phip{1}}^2}{\Phip{2} + \frac{\Phip{2}\, {\Phim{1}}^2 + \Phim{2}\, {\Phip{1}}^2 - \Phim{2}\, \Phip{2}}{\Phim{2}\, \left(\Phim{1} - \Phip{1} + 1\right)}} + 1} - \Phim{2}\right)
\end{align*}
with $\alpha = \frac{1}{2}$.

\section{Numerical results}
\label{sec:SIM}

%To compare  the  models derived in the last section we will use the following test equation with $x \in \R$, $\mu \in [-1,1]$
%\begin{align}
%\label{eq:FokkerPlanck1D}
%\partial_t\psi + \mu\cdot\partial_x \psi + \sigma_a\psi = \frac{T}{2}\Delta_\mu \psi + Q
%\end{align}
%where $\sigma_a$ is the absorption coefficient, $T$ the scattering coefficient, $Q=Q(x)$ the source and $\Delta_\mu$ given by the one-dimensional projection of the Laplace-Beltrami operator on the sphere:
%\begin{align}
%\Delta_\mu \psi = \frac{\partial}{\partial\mu}\left(\left(1-\mu^2\right) \frac{\partial \psi}{\partial \mu}\right)
%\end{align}
\subsection{Implementation}
As reference we use a standard finite-difference approximation of the Fokker-Planck equation \eqref{eq:FokkerPlanck1D}. Moment approximations are calculated using variants of a standard finite volume scheme \cite{toro2009riemann,leveque1992numerical}
\begin{align*}
u_i^{j+1} = u_i^j - \lambda\left[\frac{1}{h}\int\limits_{t_j}^{t_{j+1}}f\left(u\left(t,x_{i+\frac{1}{2}}\right)\right)~dt - \frac{1}{h}\int\limits_{t_j}^{t_{j+1}}f\left(u\left(t,x_{i-\frac{1}{2}}\right)\right)~dt\right]
\end{align*}
where $\lambda = \frac{\Delta t}{\Delta x}$ and $u_i^j$ denotes the solution at cellcenter $i$ at time $t_j$ of the general hyperbolic system of conservation laws
\begin{align*}
u_t + f\left(u\right)_x = 0
\end{align*}
For first order numerical approximation this can be rewritten in conservative form as
\begin{align*}
u_i^{j+1} = u_i^j - \lambda\left[h(u_i^j,u_{i+1}^j)-h(u_{i-1}^j,u_{i}^j)\right]
\end{align*}
with appropriate numerical fluxes $h(u,v)$. The source terms are approximated consistently using cell-averages of $\sigma_a$, $T$, and $Q$.\\

In all examples we use $n_x = 1000$ points for the spatial discretization while additionally $n_\mu=800$ points in the angular variable are used for the Fokker-Planck solution.\\
\subsubsection{Full moment P${}_n$}
The full moment spherical harmonics are discretized with a Godunov/Upwind scheme:
\begin{align*}
h(u,v) = \frac{1}{2}\left(Au+Av - \abs{A}\left(v-u\right)\right)
\end{align*}
where $A$ is the Jacobian of the flux function $f$ (which is linear for the P${}_n$ equations) and $\abs{A} = A^+ - A^-$ with $A = TDT^{-1}$, $D = \operatorname{diag}(\lambda_1,\ldots,\lambda_n)$
\begin{align*}
\lambda_i^\pm &= \pm\max\left(\pm\lambda_i,0\right)\\
D^\pm &= \operatorname{diag}\left(\lambda_1^\pm,\ldots,\lambda_n^\pm\right)\\
A^\pm &= TD^\pm T^{-1}.
\end{align*}
\subsubsection{Full moment K${}_1$/M${}_1$}
The full moment K${}_1$/M${}_1$ model is solved using an HLL solver (see e.g. \cite{toro2009riemann}):
\begin{align*}
h\left(u_L,u_R\right) = \begin{cases}
F_L & \text{if }  0\leq S_L\\
\cfrac{S_RF_L-S_LF_R+S_LS_R\left(u_R-u_L\right)}{S_R-S_L}& \text{if } S_L\leq 0 \leq S_R\\
F_R & \text{if } 0 \geq S_R. 
\end{cases}
\end{align*}
with $F_{L/R} = f(u_{L/R})$ and $S_{L/R}$ the approximate wave speeds at the left and the right cell-center, respectively. We choose $S_R = 1 = -S_L$ since the eigenvalues for M${}_1$ and K${}_1$ are bounded in absolute value by $1$ \cite{DubFeu99}.

For the minimum entropy model we use 
%\begin{align*}
%N_2 = 1-\frac{2}{Z}\left(\coth Z -\frac{1}{Z}\right)
%\end{align*}
%where $Z$ is implicitly defined by 
%\begin{align*}
%N_1 = \coth Z -\frac{1}{Z}
%\end{align*}
the numerical approximation from \cite{Ducl2010} which uses a rational fit for $\Phiv{2}=\chi\left(\Phiv{1}\right)$.
%\begin{align*}
%N_2 = \begin{cases}
%\frac{2}{3}N_1^2+\frac{1}{3} & \text{ for the K}{}_1 \text{ closure.}\\
%\cfrac{5-2\sqrt{4-3N_1^2}}{3}& \text{ for the M}{}_1 \text{ closure.}
%\end{cases}
%\end{align*}
For more details see \cite{DubFeu99,Frank07,Monreal}.\\
\subsubsection{Mixed moments methods}
All mixed moment methods are solved using a kinetic scheme, see e.g. \cite{FraDubKla04}. It performs a trivial upwinding for the $"\pm"$-variables, respectively. The only interesting equation is the "full moment" equation for the density. Here the semi-discretized scheme at cell $j$ looks like
\begin{align*}
\de_t \psiv{0}_j + \dfrac{\left(\psiv{1}_{+,j}-\psiv{1}_{+,j-1}\right) +\left(\psiv{1}_{-,j+1}-\psiv{1}_{-,j}\right)}{\Delta x} = S\left(t,x,\gamma\right)_j
\end{align*}

\subsubsection{Time discretization and boundary conditions}
The time integration for all schemes is done using the Mathworks MATLAB \cite{MATLAB:2012} explicit adaptive second order Runge-Kutta integrator \emph{ode23}. The Fokker-Planck solution for the Source-Beam test case is calculated using the integrator for stiff equations \emph{ode23s}.
Note that both integrators mentioned here are not strongly stability preserving, see e.g. \cite{Gottlieb2001} for details.

Boundary conditions for moment problems are always problematic. We model them using ghost cells at the boundary where we directly prescribe the underlying kinetic distribution. From this we can consistently calculate the moments for all models of arbitrary order. 

Under some assumptions the spherical harmonics are equivalent to a special discrete ordinates Fokker-Planck discretization. With this, the spherical harmonics solution obeys the desired positivity for the distribution function. However, necessary for this are Mark boundary conditions \cite{Mar44,Mar45} which we do not use in our test cases. Therefore the spherical harmonics solution may oscillate into the negative as shown below.

\subsubsection{Laplace-Beltrami operator}
The Laplace-Beltrami operator is closed consistently using the corresponding distribution functions. The only exceptions are the mixed moment Kershaw closures. Using integration by parts gives
\begin{align*}
\intp{\mu^m&\de_\mu\left(\left(1-\mu^2\right)\de_\mu\psi\right)} = \underbrace{\left[\mu^m\left(\left(1-\mu^2\right)\de_\mu\psi\right)\right]_0^1}_{=0~\forall m\geq 1} - \intp{m\mu^{m-1}\left(1-\mu^2\right)\de_\mu\psi}\\
&= -\underbrace{\left[m\mu^{m-1}\left(1-\mu^2\right)\psi\right]_0^1}_{=0~\forall m\geq 2} + \intp{m\,\left(m - 1\right)  \mu^{m - 2}\psi - m\left(m + 1\right) \mu^m\psi}\\
&= m\,\left(m - 1\right)  \psip{m-2}- m\left(m + 1\right)\psip{m}
\end{align*}
and analogously
\begin{align*}
\intm{\mu^m&\de_\mu\left(\left(1-\mu^2\right)\de_\mu\psi\right)} = m\,\left(m - 1\right)  \psim{m-2}- m\left(m + 1\right)\psim{m}
\end{align*}
For $m=1$ we additionally get the microscopic term $\psi\left(0\right)$. This is obviously a problem since the Kershaw closure consists of Dirac delta functions. We therefore close this operator using the MP${}_n$ approximation.\\
\newline
\subsubsection{Realizability projection}
Sometimes the numerical approximations \\leave their realizability domain. This is a problem since then some closures cannot be evaluated anymore. The first order schemes used in this work preserve realizability since they are convex combinations of realizable vectors. 

Still problems may occur near the realizability boundary. This preserving property obviously holds only for exact arithmetic. If the evaluation of the flux is inexact (e.g. the tabulation for MM${}_1$ or simple numerical errors) the scheme may lose this property. Where necessary we apply a suitable projection back into the corresponding realizability domain. This is done by doing a line search along the ray $\alpha u + \left(1-\alpha\right) u_{eq}$ where $u$ is the current vector of moments and $u_{eq}$ is the equilibrium solution, that is, the moments of the constant distribution $\psi = \alpha$. Then we solve for $\alpha$ such that the corrected solution lies on the boundary (if evaluation on the boundary is possible, e.g. for Kershaw closures). This somehow corresponds to a limiting procedure in the angular variable, where our local (possibly nonlinear) ansatz is limited towards a constant solution.

\subsubsection{Comparison of methods}
For every example we present pictures as well as tables for comparison. In the tables, we always calculate for models ``$m1$'' and ``$m2$'' the difference in the corresponding $L^p$ sense for the densities $\psiv{0}$:
\begin{align*}
E_p(\psiv{0}_{{m1}},\psiv{0}_{{m2}})^p = \int\limits_{0}^{T}\int\limits_{D}\left|\psiv{0}_{{m1}}(x,t)-\psiv{0}_{{m2}}(x,t)\right|^p~dx~dt
\end{align*}
where $T$ is the final time of our calculations and $D$ the spatial domain. Characteristic errors are the $L^p$ difference at a specific time:
\begin{align*}
E_p^c(\psiv{0}_{{m1}},\psiv{0}_{{m2}},t)^p = \int\limits_{D}\left|\psiv{0}_{{m1}}(x,t)-\psiv{0}_{{m2}}(x,t)\right|^p~dx
\end{align*}

 \subsection{Beam in vacuum hitting an absorbing object}
In this test we model a beam hitting an object in vacuum. We therefore set for $x\in[-0.5,0.5]$
 \begin{gather*}
 \sigma_a(x) = \begin{cases}
 10 & \text{ if } x\in[-0.1,0.2]\\
 0 & \text{ else}
 \end{cases}, \hskip 1cm
 T(x) = Q(x)=0
 \end{gather*}
 and initial and boundary conditions
 \begin{align*}
 \psi\left(x,\mu,0\right) &= 10^{-4} &x\in(-0.5,0.5)\\
 \psi\left(-0.5,\mu>0,t\right) &= \frac{3\, \mathrm{e}^{3\, \mu + 3}}{\mathrm{e}^{6} - 1},~~~~~~~~&\psi\left(0.5,\mu<0,t\right) = 10^{-4}
 \end{align*}
 
Note that we don't choose a completely forward peaked solution (that is, a Dirac delta) because here all minimum entropy and Kershaw models are exact. As shown in Figure \ref{fig:1Beam} all models have some difficulties reproducing the exact shape of the Fokker-Planck solution for this specific choice of boundary data. One can nicely see the different wave packages arising from this Riemann problem at the left boundary. Fortunately all models recover the correct stationary solution.

Table \ref{tab:1Beam} confirms these observations. M${}_1$ and MM${}_1$ perform equally well. What is interesting is the large deviation between MM${}_1$ and MK${}_1$. As expected, MK${}_2$ provides better results as MK${}_1$. Going to a high number of modes (P${}_{51}$), the error becomes reasonably small.
   
   \InsertFig{
            \centering
                  \subfloat[$t=0.5$]{
                        \includegraphics[width = 0.48\textwidth]{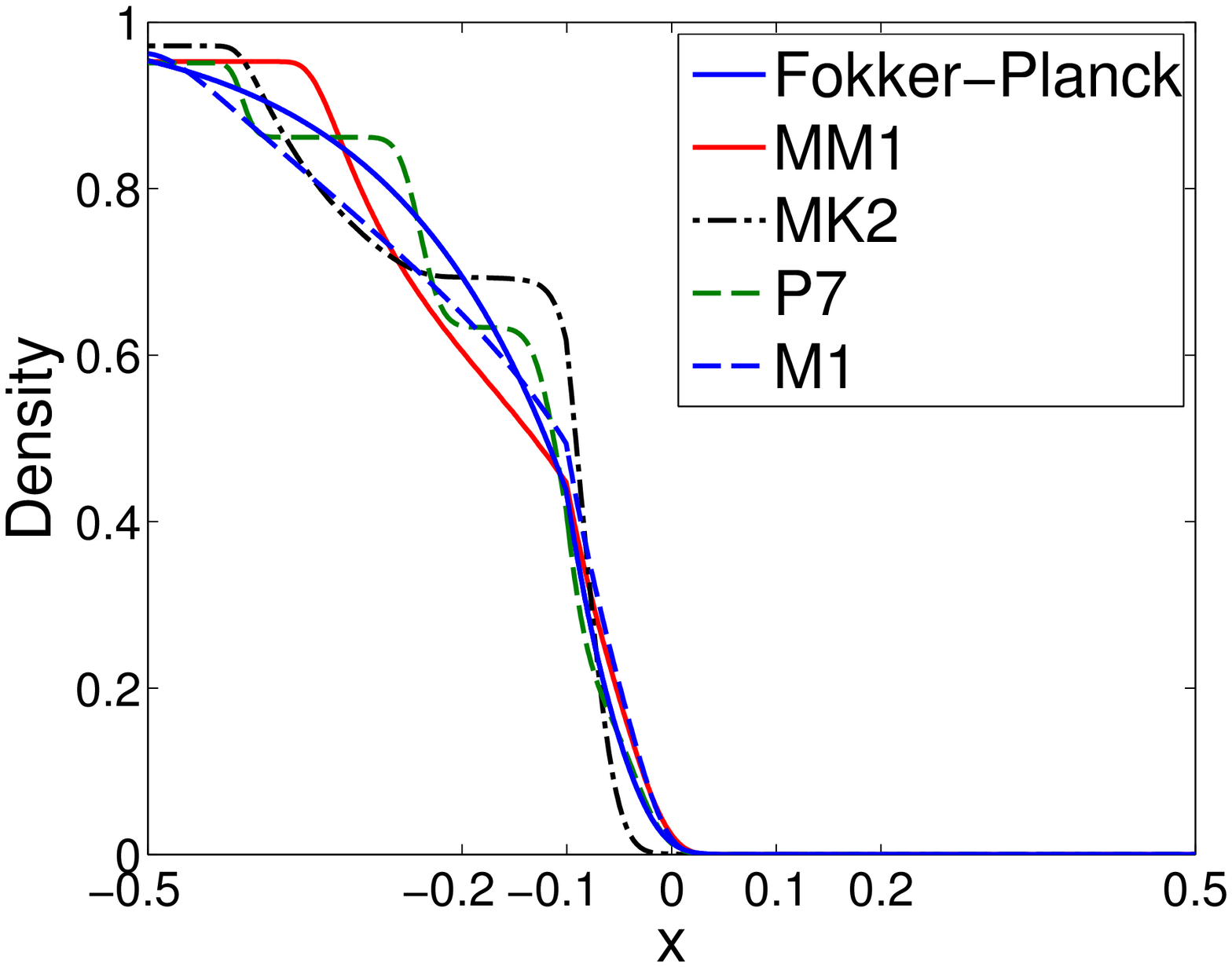}
                        \label{fig:1BeamCut1}
                        }
                        \subfloat[$t=1$]{
                                                \includegraphics[width = 0.48\textwidth]{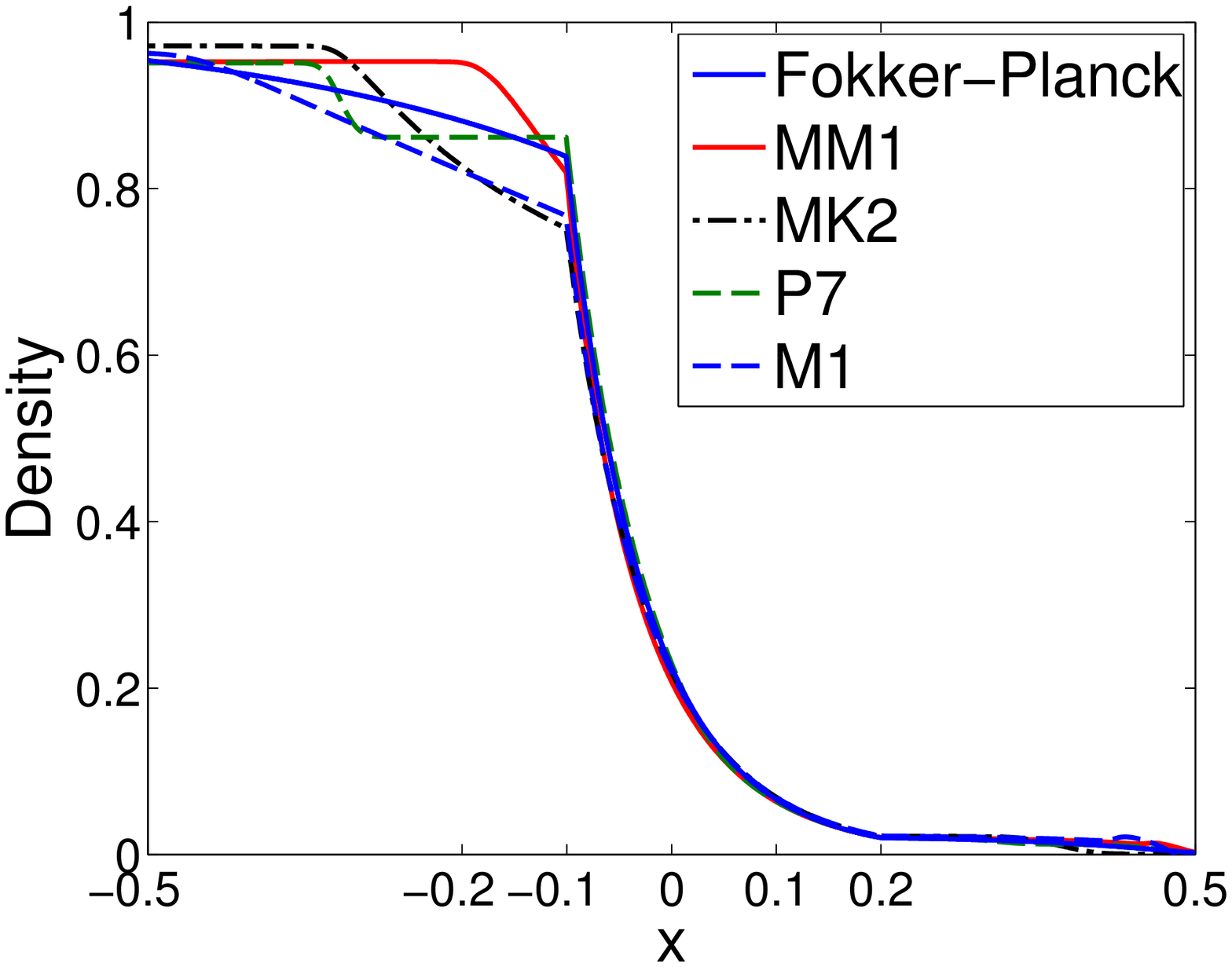}
                                                \label{fig:1BeamCut2}
                                                }\\
                  \subfloat[$t=2$]{
                                          \includegraphics[width = 0.48\textwidth]{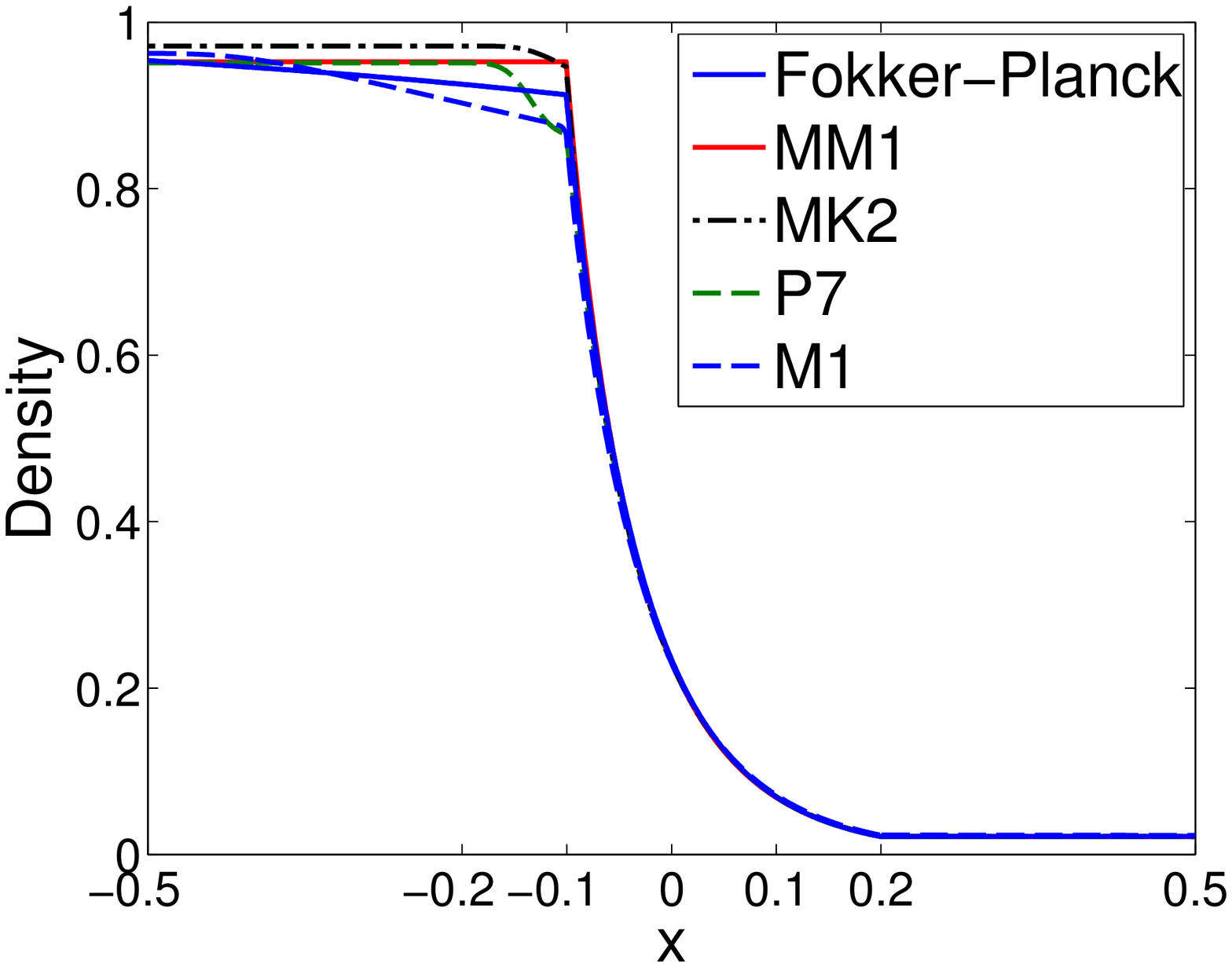}
                                          \label{fig:1BeamCut3}
                                          }
                                          \subfloat[$t=4$]{
                                                                  \includegraphics[width = 0.48\textwidth]{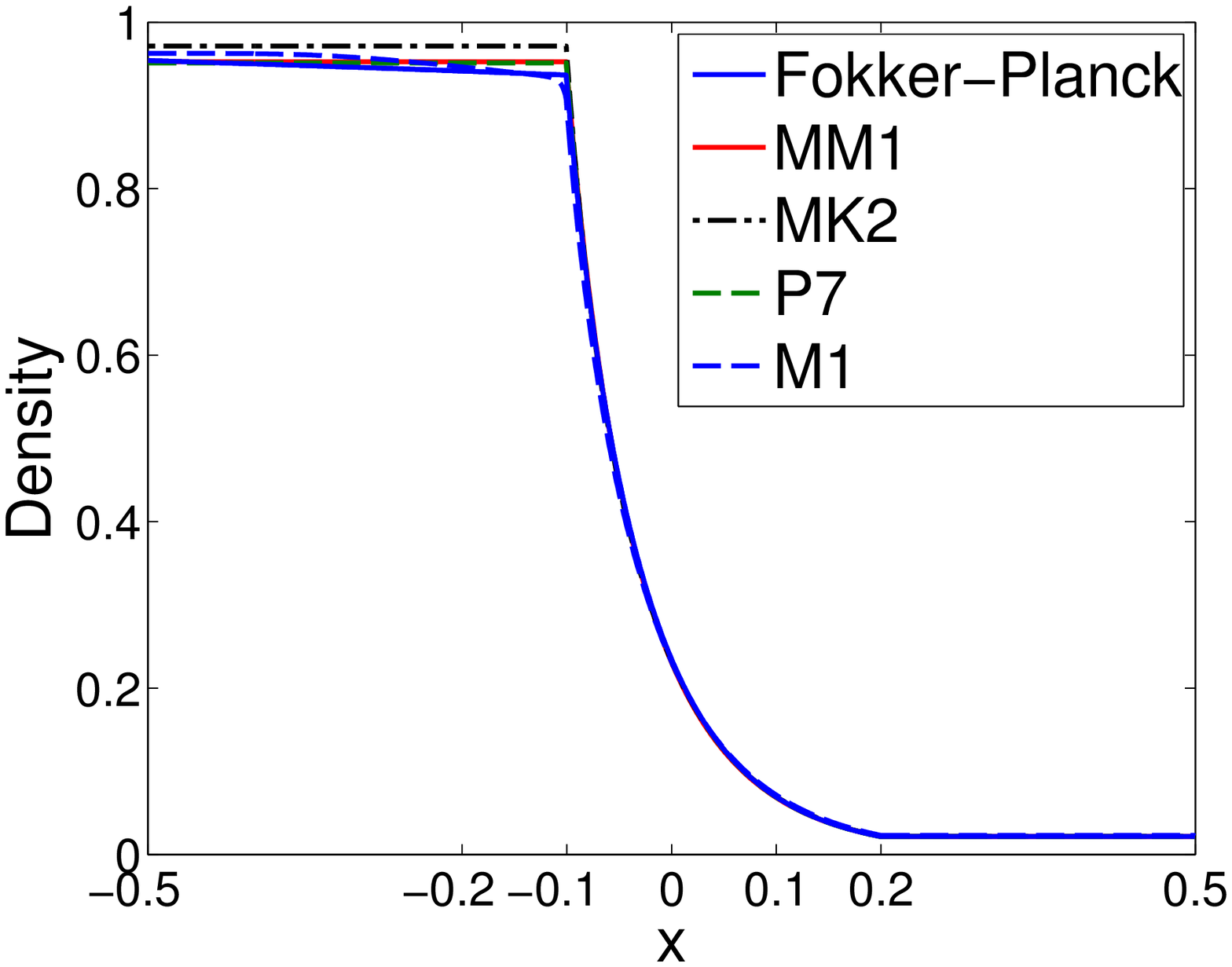}
                                                                  \label{fig:1BeamCut4}
                                                                  }\\                              
      	          }{Solutions of the One-Beam test case.}{\label{fig:1Beam}}

 \begin{table}[htbp]
   \centering
   \small\begin{tabular}{ccccccc}{Model} & {$L^1$} & {$L^2$} & {$L^\infty$} & {Char. $L^1$} & {Char. $L^2$} & {Char $L^\infty$}\\
  {MM${}_1$}  & 0.066339 & 0.033333 & 0.099496 & 0.01773 & 0.022299 & 0.041594\\
   {MK${}_1$} & 0.16669 & 0.071425 & 0.21744 & 0.049446 & 0.053638 & 0.075535\\
    {MK${}_2$}  & 0.10197 & 0.04527 & 0.19809 & 0.033389 & 0.037324 & 0.052803\\
     {M${}_1$}  & 0.07234 & 0.033757 & 0.11092 & 0.020511 & 0.023531 & 0.050847\\
     P${}_{7}$ & 0.042149 & 0.022403 & 0.0851 & 0.015638 & 0.020247 & 0.049295\\
          P${}_{51}$& 0.0059108 & 0.0026413 & 0.0055583 & 0.0023004 & 0.002608 & 0.0050124 
       \end{tabular}
   
     \caption{Relative $L^p$ errors, $p\in\{1,2,\infty\}$, for different models with respect to the Fokker-Planck solution $n_x = 1000$, $n_\mu = 800 $. One-Beam test case. Characteristic errors evaluated at $t=2$.}
     \label{tab:1Beam}
   \end{table}

\subsection{Two beams}
This test case models two beams entering into an absorbing medium. Nearly no particles are in the domain initially:
\begin{align*}
\psi\left(x,\mu,0\right) = 10^{-4} && x\in \left( -\frac{1}{2},\frac{1}{2} \right),
\end{align*}
and the equation is supplemented with boundary conditions
\begin{align*}
\psi\left(-\frac{1}{2},\mu>0,t\right) = 100\cdot\delta\left(\mu-1\right),&&\psi\left(\frac{1}{2},\mu<0,t\right) = 100\cdot\delta\left(\mu+1\right)
\end{align*}
Additionally we set the absorption parameter $\sigma_a = 4$ and the transport coefficient $T = 0$. This is the classical setting where full moment $M_1$ (and $K_1$ as well) fails due to a zero netflux during the collision of the two beams. This is shown in Figure \ref{fig:2Beams1}. Error estimates for the different models are shown in the Table \ref{tab:2Beams}.

\begin{table}[htbp]
\centering
\small\begin{tabular}{ccccccc} {Model} & {$L^1$} & {$L^2$} & {$L^\infty$} & {Char. $L^1$} & {Char. $L^2$} & {Char $L^\infty$}\\
{MM${}_1$} & 0.0001528 & 0.0039495 & 0.0055959 & 0.0038968 & 0.0042576 & 0.0055318\\ 
{MK${}_1$} & 0.00010846 & 0.0025032 & 0.0025472 & 0.0025031 & 0.0025032 & 0.0025102\\ 
{MK${}_2$} & 0.00010816 & 0.0025029 & 0.0044977 & 0.0024997 & 0.0025043 & 0.0027055\\ 
 {MP${}_2$} & 0.010856 & 0.39845 & 2.0879 & 0.22026 & 0.38489 & 2.087\\ 
 {MP${}_5$} & 0.0041039 & 0.21726 & 1.5572 & 0.071395 & 0.20711 & 1.5519\\ 
  {MP${}_{10}$} & 0.0015805 & 0.11101 & 0.9418 & 0.023389 & 0.10519 & 0.93944\\ 
   {M${}_1$} & 0.0057931 & 0.18592 & 0.25789 & 0.16074 & 0.19928 & 0.21276\\ 
  {K${}_1$} & 0.0055333 & 0.17446 & 0.25725 & 0.15275 & 0.1857 & 0.21545\\
 {P${}_5$} & 0.0035746 & 0.10779 & 0.29795 & 0.062191 & 0.09103 & 0.23609\\ 
 {P${}_{11}$} & 0.0015977 & 0.056482 & 0.16725 & 0.025458 & 0.047333 & 0.16291\\ 
 {P${}_{21}$} & 0.0006984 & 0.030108 & 0.12362 & 0.010751 & 0.025912 & 0.12202 
      \end{tabular}
\caption{Relative $L^p$ errors, $p\in\{1,2,\infty\}$, for different models with respect to the Fokker-Planck solution with $n_x =1000 $, $ n_\mu = 800$. Two beam test case.  Characteristic errors evaluated at $t=4$.}
\label{tab:2Beams}
\end{table}\normalsize

Note that in this case the full moment P${}_n$ model converges faster towards the Fokker-Planck solution than the mixed MP${}_n$ model. Here we always compare approximately the same number of variables (e.g. P${}_{21}$ with MP${}_{10}$ to avoid instabilities in the P${}_n$ model). Mixed minimum-entropy (MM${}_1$) and mixed Kershaw closures (MK${}_1$ and MK${}_2$) perform well.

As shown in Figure \ref{fig:2Beams1} the MM${}_1$ solution is indistinguishable from the Fokker-Planck solution while M${}_1$ and P${}_5$ behave differently. The $L^1$ errors in Table \ref{tab:2Beams} show that (up to numerical deviations of order $0.1h = \frac{1}{10n_x}$) MM${}_1$, MK${}_1$ and MK${}_2$ give the same results as Fokker-Planck. This is to be expected since the Fokker-Planck solution is a linear combination of two Dirac deltas in $\mu$ which can be arbitrarily closely prescribed by the MM${}_1$ model and exactly prescribed by MK${}_1$ and MK${}_2$.

\InsertFig{\subfloat[Fokker-Planck]{
      \includegraphics[width = 0.48\textwidth]{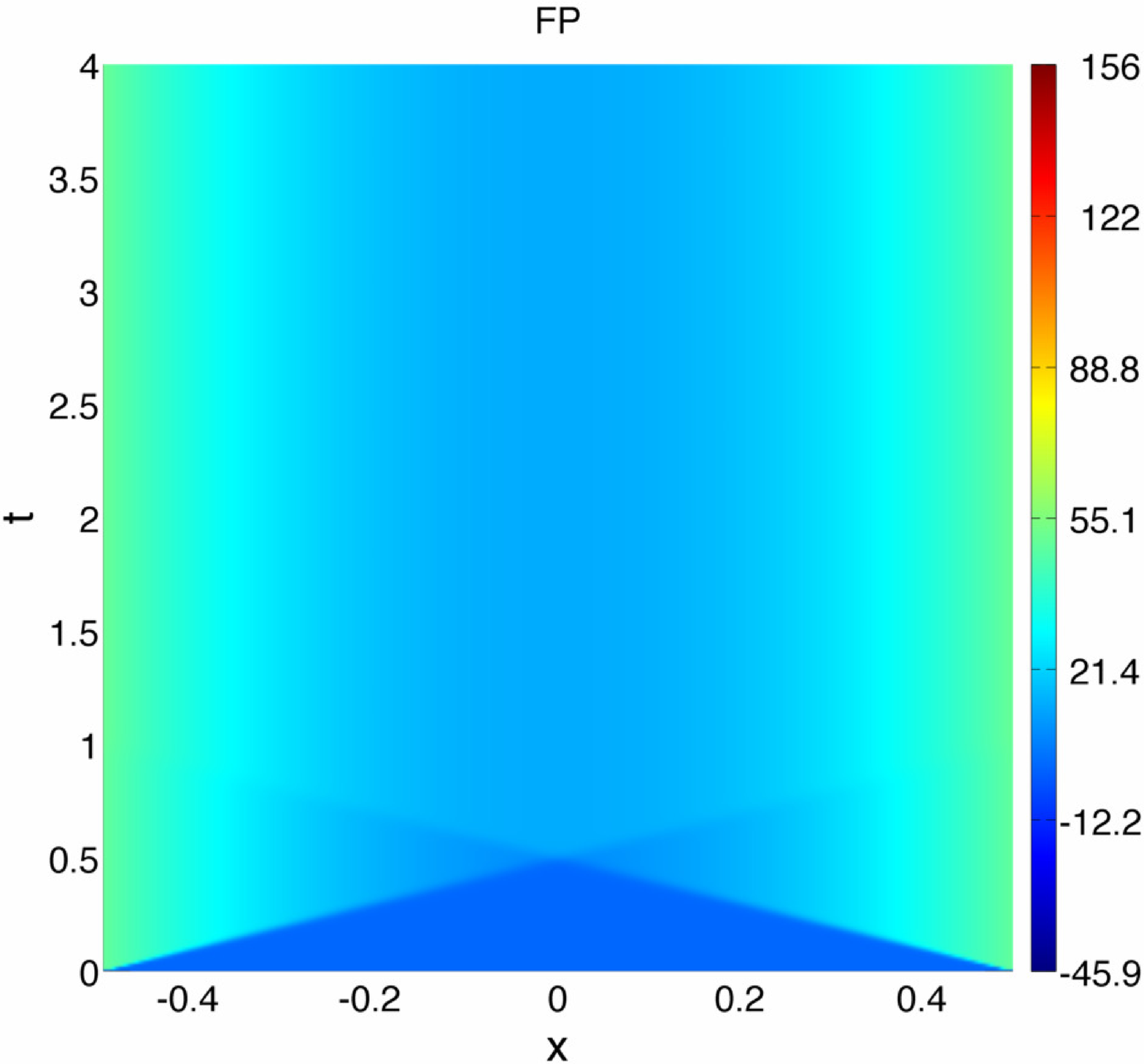}
      \label{fig:2Beams-FP}
      }
\subfloat[M${}_1$]{
      \includegraphics[width = 0.48\textwidth]{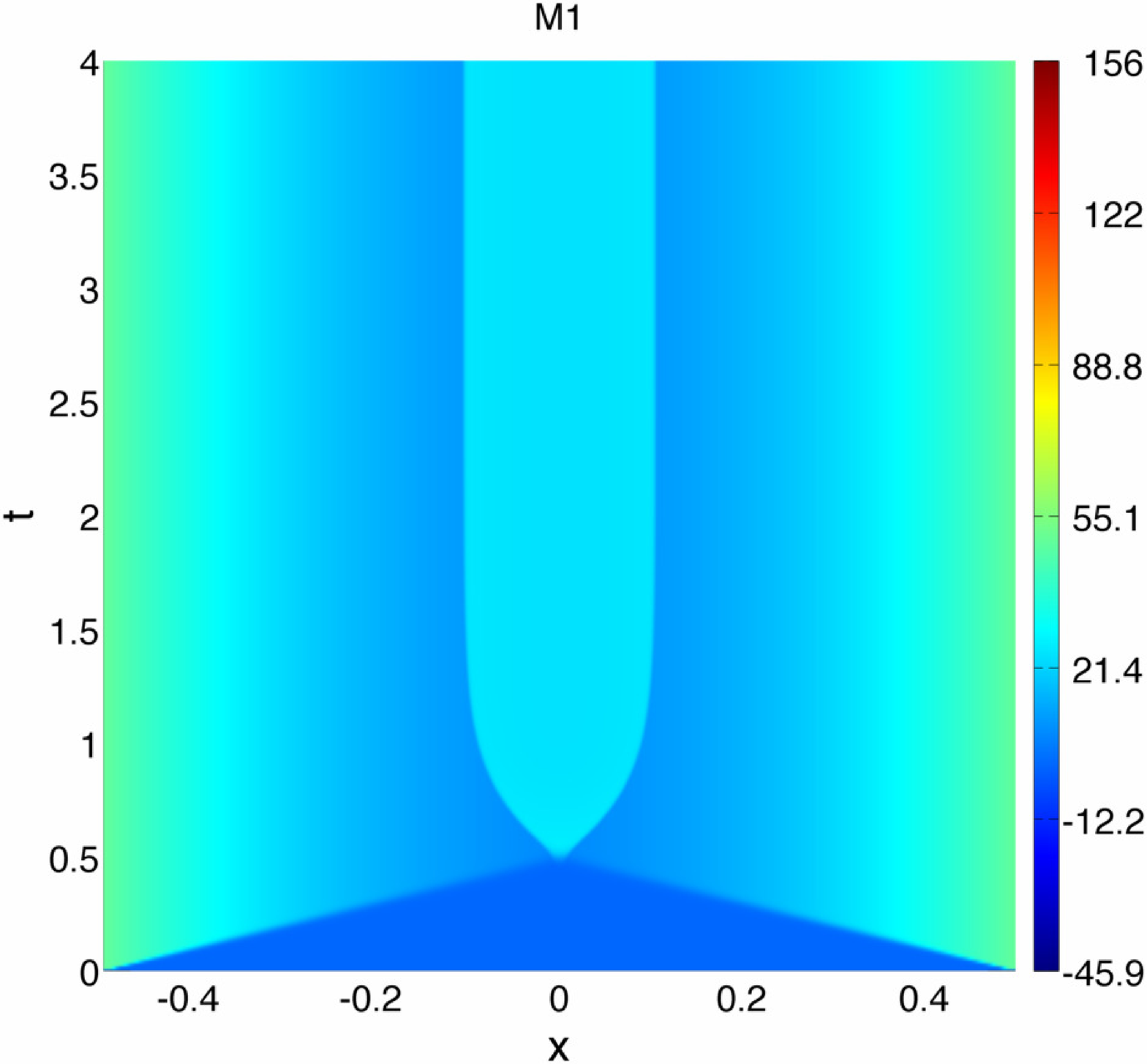}
               \label{fig:2Beams-M1}}\\
               \centering
               \subfloat[$t = 4$]{
                     \includegraphics[width = 0.9\textwidth]{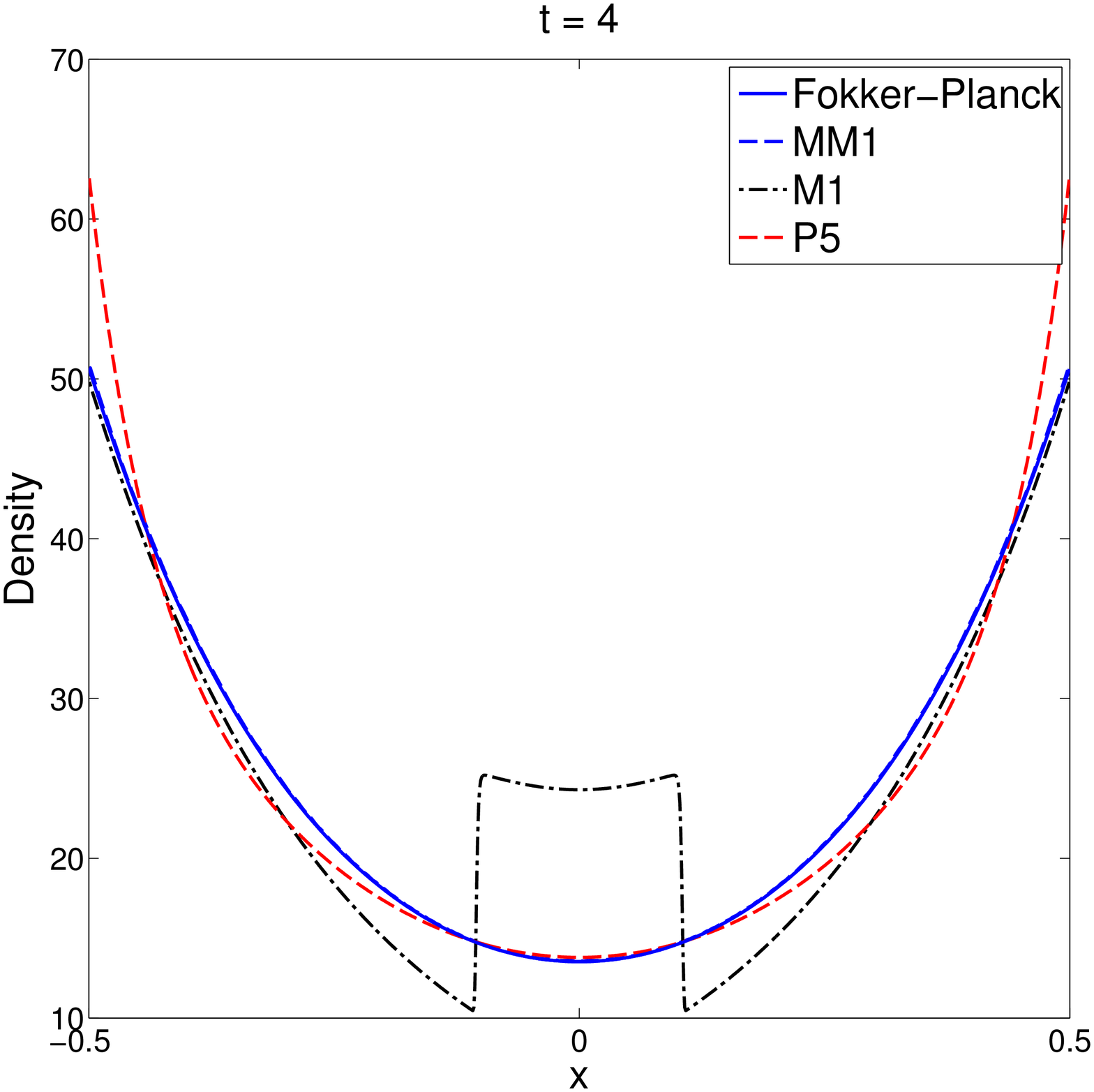}
                     \label{fig:2BeamsCut}
                     }
   	          }{Solutions of the Two-Beams test case. MM${}_1$ is approximately identical to Fokker-Planck.}{\label{fig:2Beams1}}

  \subsection{Rectangular IC}
   In this test case we start with an isotropic distribution where nearly all mass is concentrated in the middle of the domain $X = [0,7]$:
  \begin{align*}
  \psi\left(x,\mu,0\right) = \begin{cases}
  10&\text{ if } x\in [3,4]\\
  10^{-4}&\text{ else}
  \end{cases}
  \end{align*}
  At the boundary we have
  \begin{align*}
  \psi\left(0,\mu>0,t\right) = 10^{-4},&&  \psi\left(7,\mu<0,t\right) = 10^{-4}
  \end{align*}
    We use a slightly scattering material without absorption, therefore $\sigma_a = 0$ and $T = 10^{-2}$.
  \begin{table}[htbp]
  \centering
  \small\begin{tabular}{ccccccc}{Model} & {$L^1$} & {$L^2$} & {$L^\infty$} & {Char. $L^1$} & {Char. $L^2$} & {Char $L^\infty$}\\
{MM${}_1$} & 0.019549 & 0.23133 & 0.16337 & 0.16749 & 0.17692 & 0.2757\\ 
{MK${}_1$} & 0.022142 & 0.28221 & 0.3131 & 0.2479 & 0.27771 & 0.62157\\ 
{MK${}_2$} & 0.016665 & 0.25969 & 0.22311 & 0.096056 & 0.11219 & 0.22806\\ 
{MP${}_5$} & 0.0027983 & 0.038741 & 0.025006 & 0.012051 & 0.013356 & 0.021774\\
 {MP${}_{10}$} & 0.00017611 & 0.0028254 & 0.0024361 & 0.0019917 & 0.0021764 & 0.0036513\\
  {M${}_1$} & 0.03298 & 0.39211 & 0.22989 & 0.14476 & 0.14726 & 0.17734\\
   {P${}_{11}$} & 0.0037782 & 0.051546 & 0.0377 & 0.022936 & 0.029121 & 0.055617\\
    {P${}_{21}$} & 0.00020889 & 0.0047516 & 0.0090014 & 0.0040468 & 0.0058232 & 0.016255 
          \end{tabular}
  \caption{Relative $L^p$ errors, $p\in\{1,2,\infty\}$, for different models with respect to the Fokker-Planck solution with $n_x = 1000$, $n_\mu = 800 $. Rectangular IC test case. Characteristic errors evaluated at $t=1$.}
  \label{tab:RectIC}
  \end{table}
  
  \InsertFig{\subfloat[Fokker-Planck]{
        \includegraphics[width = 0.43\textwidth]{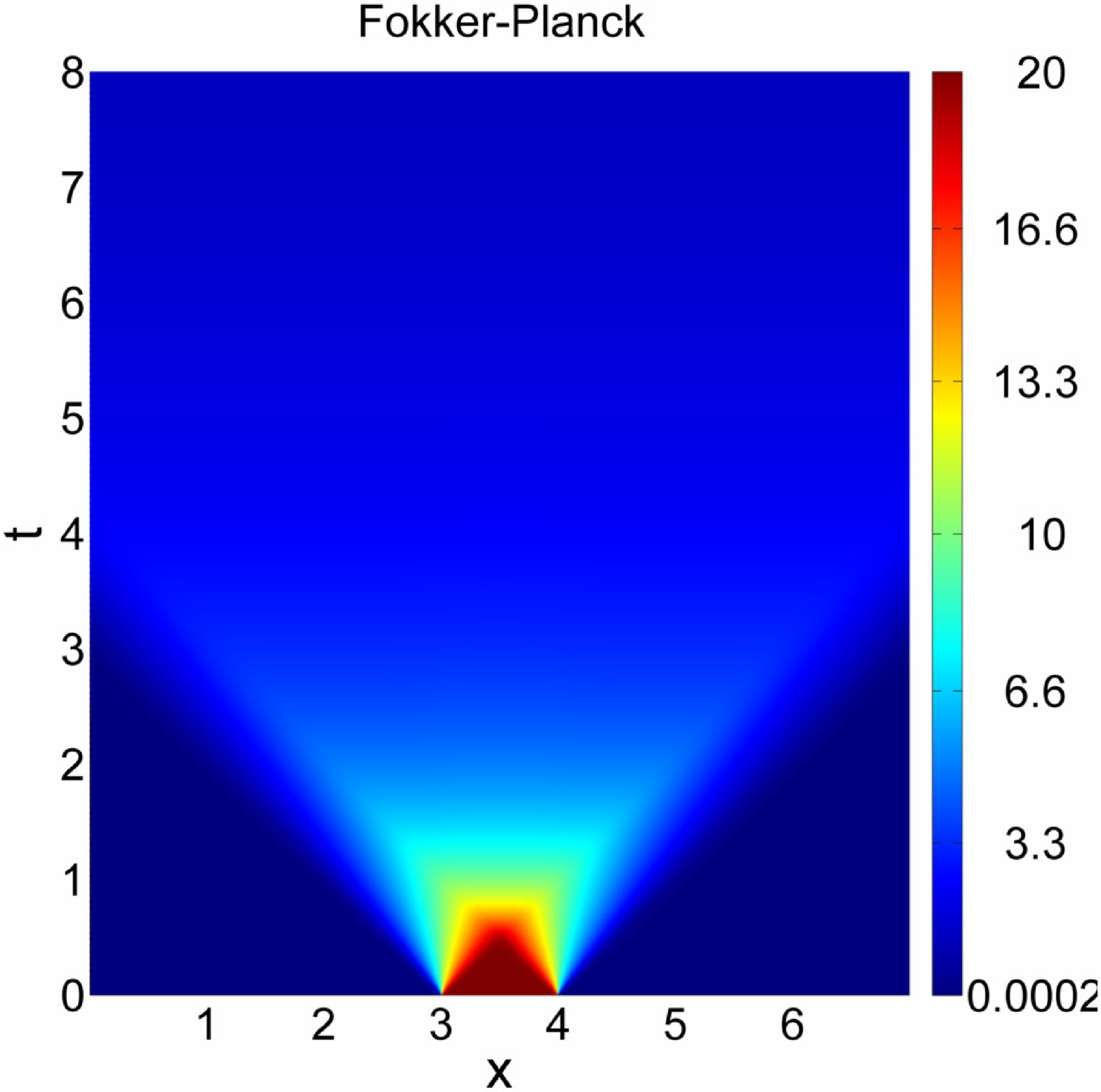}
        \label{fig:RectIC-FP}
        }
  \subfloat[MM${}_1$]{
        \includegraphics[width = 0.43\textwidth]{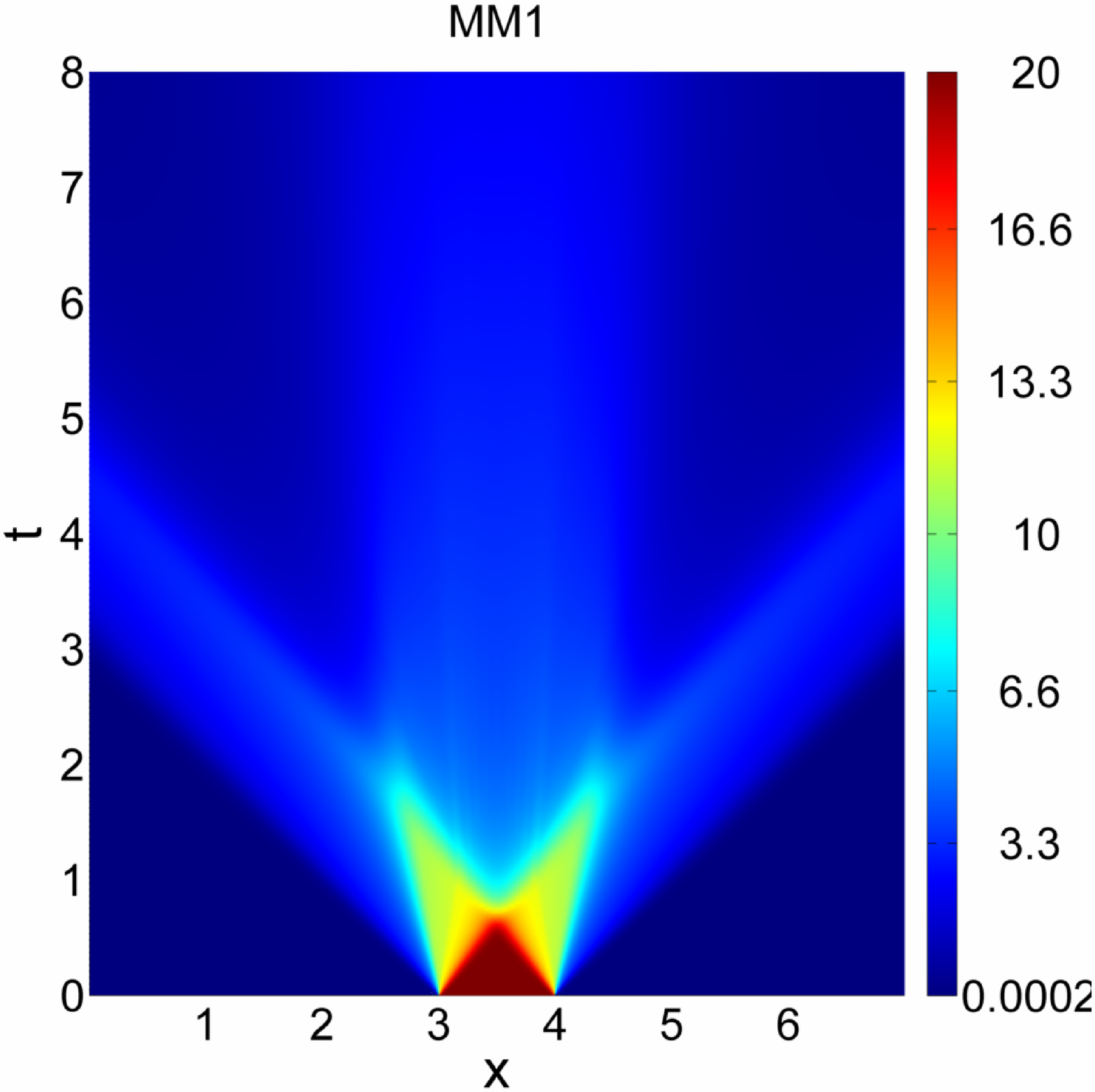}
                 \label{fig:RectIC-MM1}
        }\\
        \subfloat[MK${}_1$]{
                 \includegraphics[width = 0.43\textwidth]{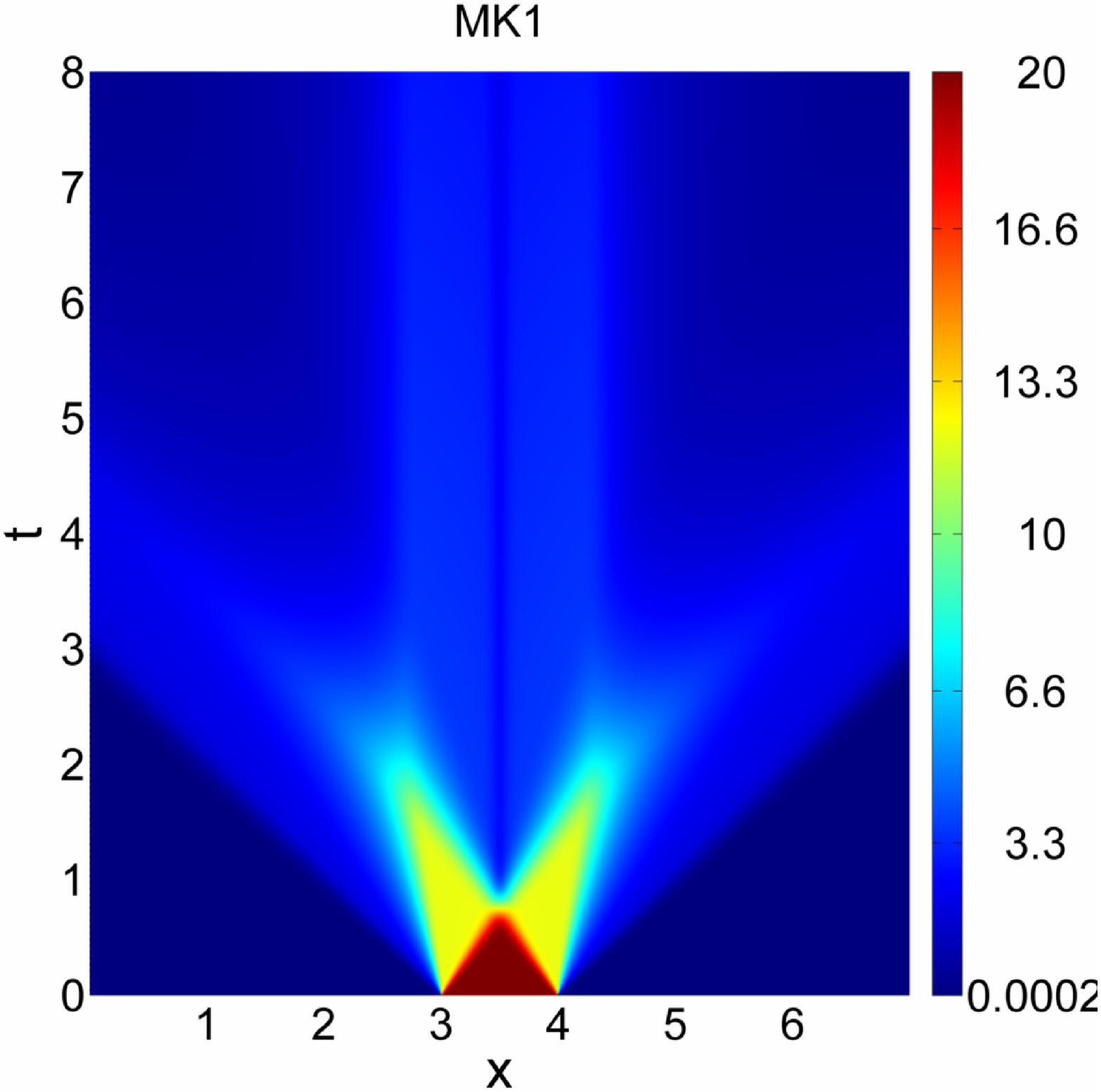}
                 \label{fig:RectIC-MK1}
                 }
        \subfloat[MK${}_2$]{
                 \includegraphics[width = 0.43\textwidth]{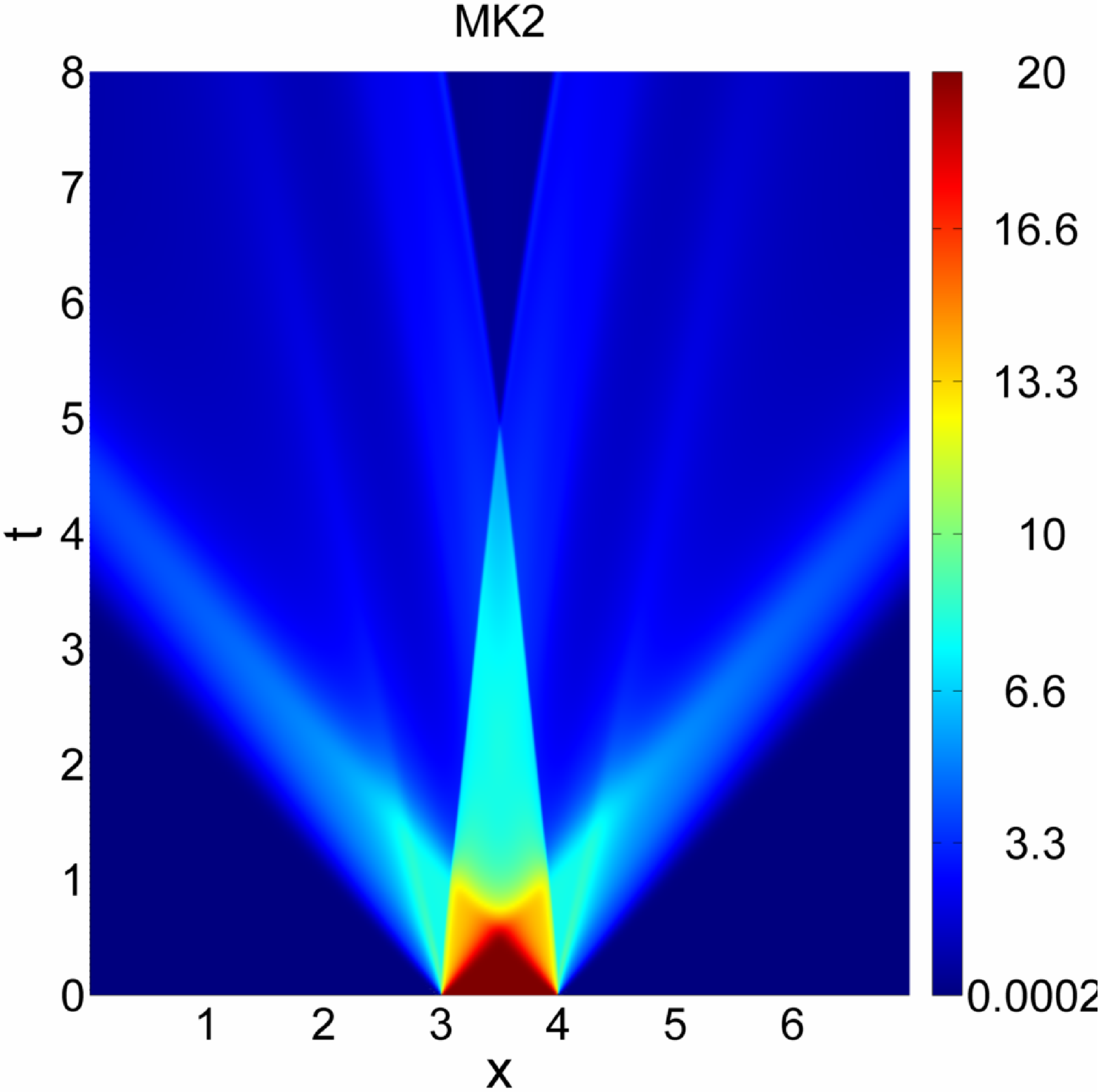}
                          \label{fig:RectIC-MK2}
                 }\\
    	 \subfloat[M${}_1$]{
  \includegraphics[width = 0.43\textwidth]{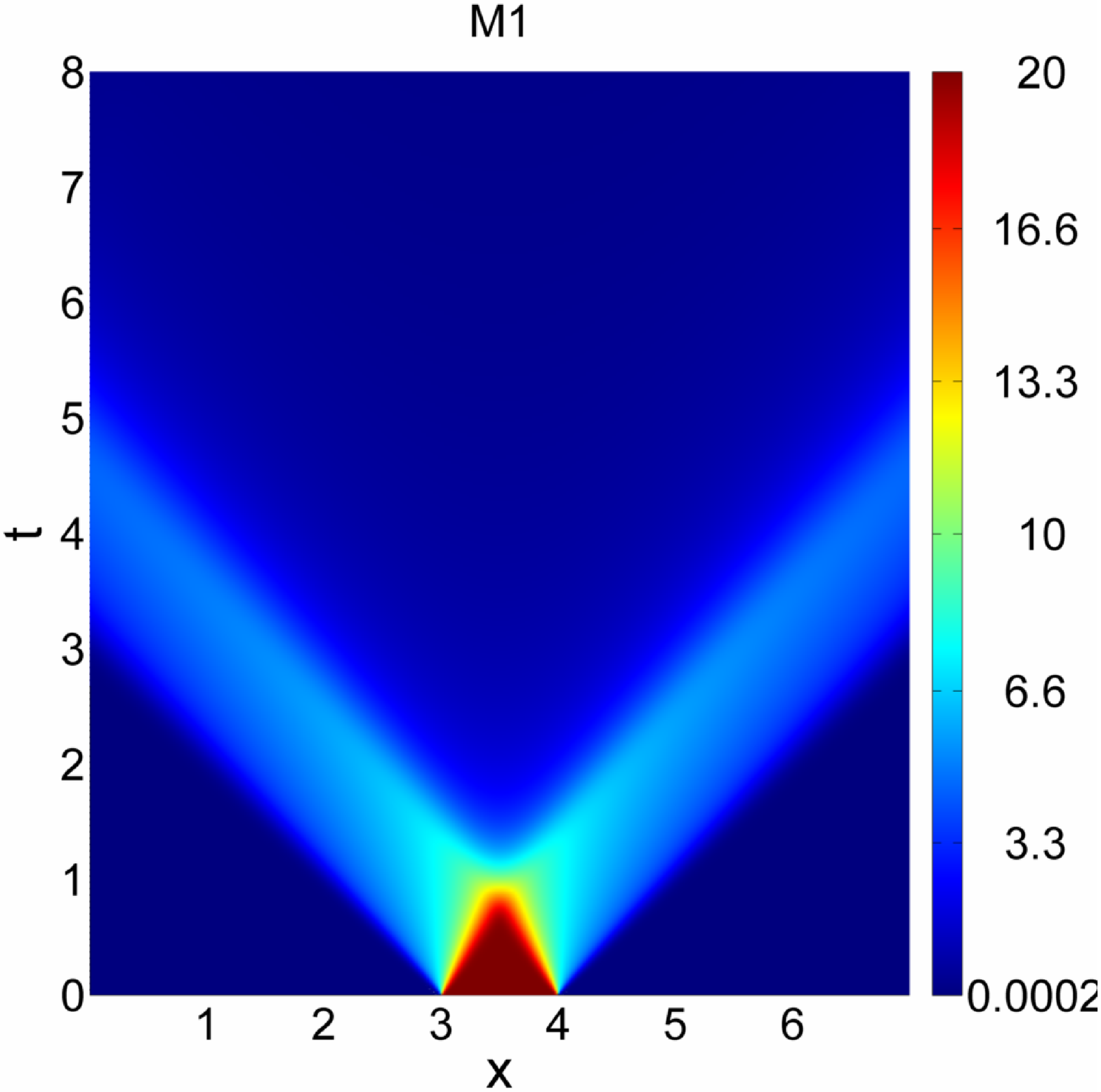}
        \label{fig:RectIC-M1}}
        \subfloat[MP${}_{10}$]{
        \includegraphics[width = 0.43\textwidth]{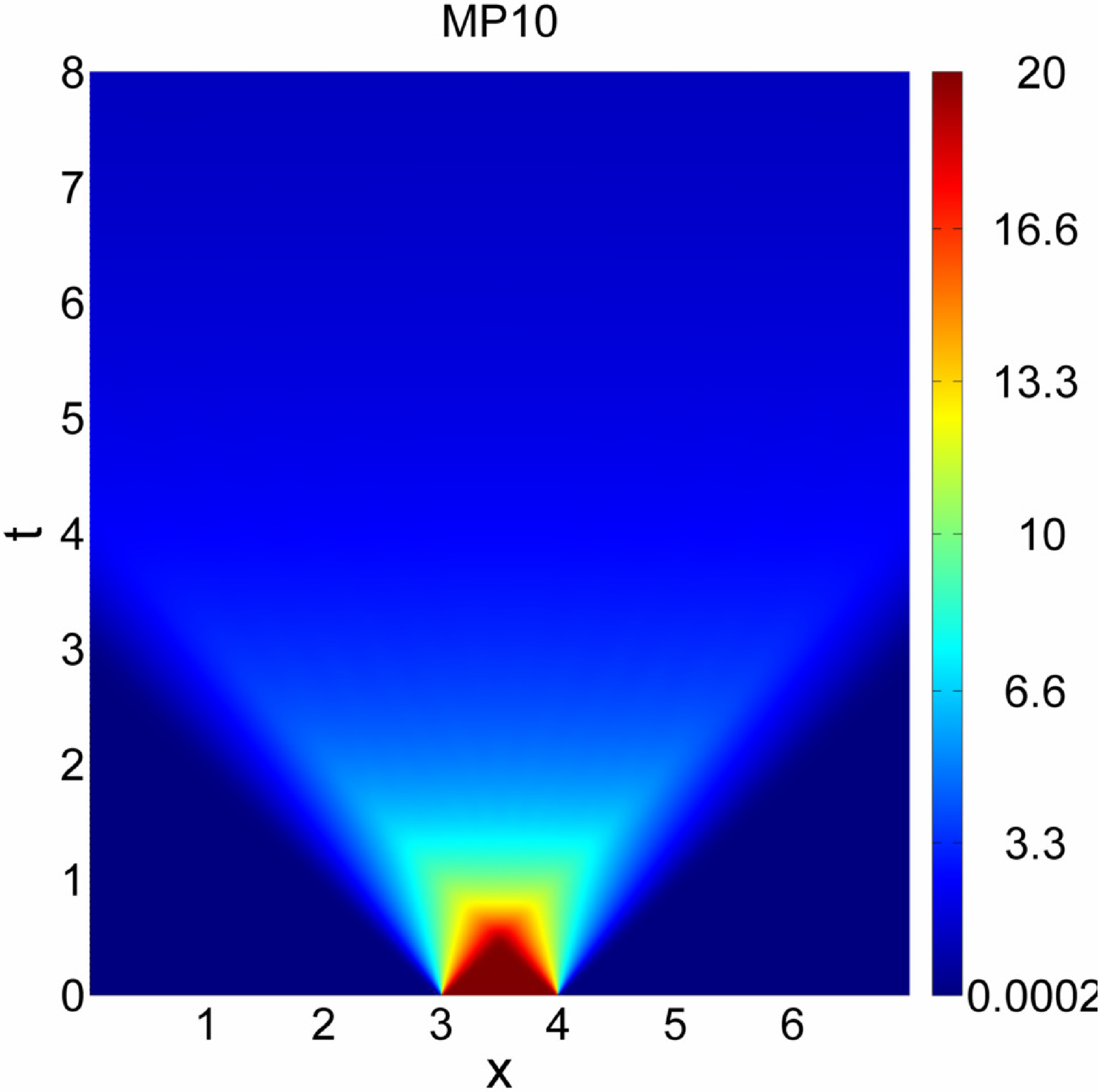}
                 \label{fig:RectIC-MP10}}\\
%                  \subfloat[P${}_{21}$]{
%                 \includegraphics[width = 0.48\textwidth]{RectIC/RectIC-P21}
%                          \label{fig:RectIC-P20}}
     	          }{Solutions of the Rectangular IC test case.}{\label{fig:RectIC1}}
  
     \InsertFig{
              \centering
                    {
                          \includegraphics[width = 0.9\textwidth]{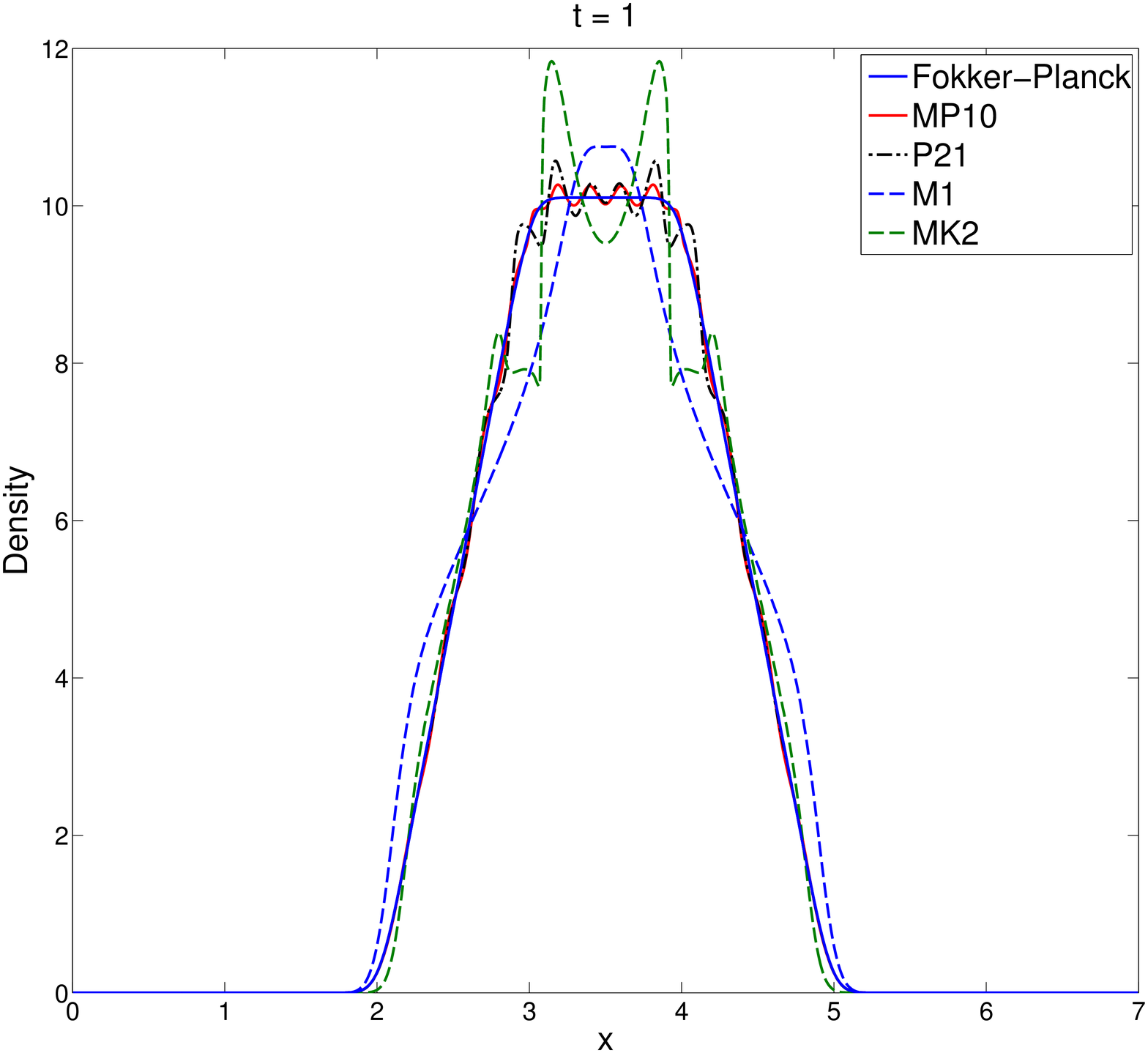}
                          \label{fig:RectICCut}
                          }\\
        	          }{Cut of the solutions of the Rectangular IC test case evaluated at the characteristical time $t=1$.}{\label{fig:RectIC}}

 Here, the mixed polynomial MP${}_n$ models perform better than the full spherical harmonics P${}_n$ models. MM${}_1$ performs slightly better than MK${}_1$ but worse than MK${}_2$. This can be expected since more waves for the solution are available. By the same argument the M${}_1$ model performs worse than the other non-linear models. 
 As one can see in the space-time plots in Figure \ref{fig:RectIC1} the non-linear closures are exact at the beginning for a short period of time. This is true until the Fokker-Planck distribution is no longer isotropic.
 In Figure \ref{fig:RectIC} the different wave packages of the models can be seen. 
 \subsection{Source-Beam}
 This test has been used in \cite{Frank:1471763}. However we do not use the smoothed version but the discontinuous one with $X = [0,3]$,
 \begin{gather*}
 \sigma_a(x) = \begin{cases}
 1 & \text{ if } x\leq 2\\
 0 & \text{ else}
 \end{cases}, \hskip 1cm
 T(x) = \begin{cases}
  0 & \text{ if } x\leq 1\\
  2 & \text{ if } 1<x\leq 2\\
  10 & \text{ else}
  \end{cases},\\
   Q(x) = \begin{cases}
    1 & \text{ if } 1\leq x\leq 1.5\\
    0 & \text{ else}
    \end{cases}\\
 \end{gather*}
 and initial and boundary conditions
 \begin{gather*}
 \psi\left(x,\mu,0\right) = 10^{-4}\\
 \psi\left(0,\mu>0,t\right) = \delta\left(\mu-1\right),~~~~~~~~\psi\left(3,\mu<0,t\right) = 10^{-4}
 \end{gather*}
 
 As shown in Figure \ref{fig:Sourcebeam} the full moment models are not able to reproduce the Fokker-Planck solution, not even in steady state ($t=4$). The MK${}_2$ model provides reasonably better results than MM${}_1$ and MK${}_1$.
   
   \InsertFig{
            \centering
                  \subfloat[$t=0.5$]{
                        \includegraphics[width = 0.48\textwidth]{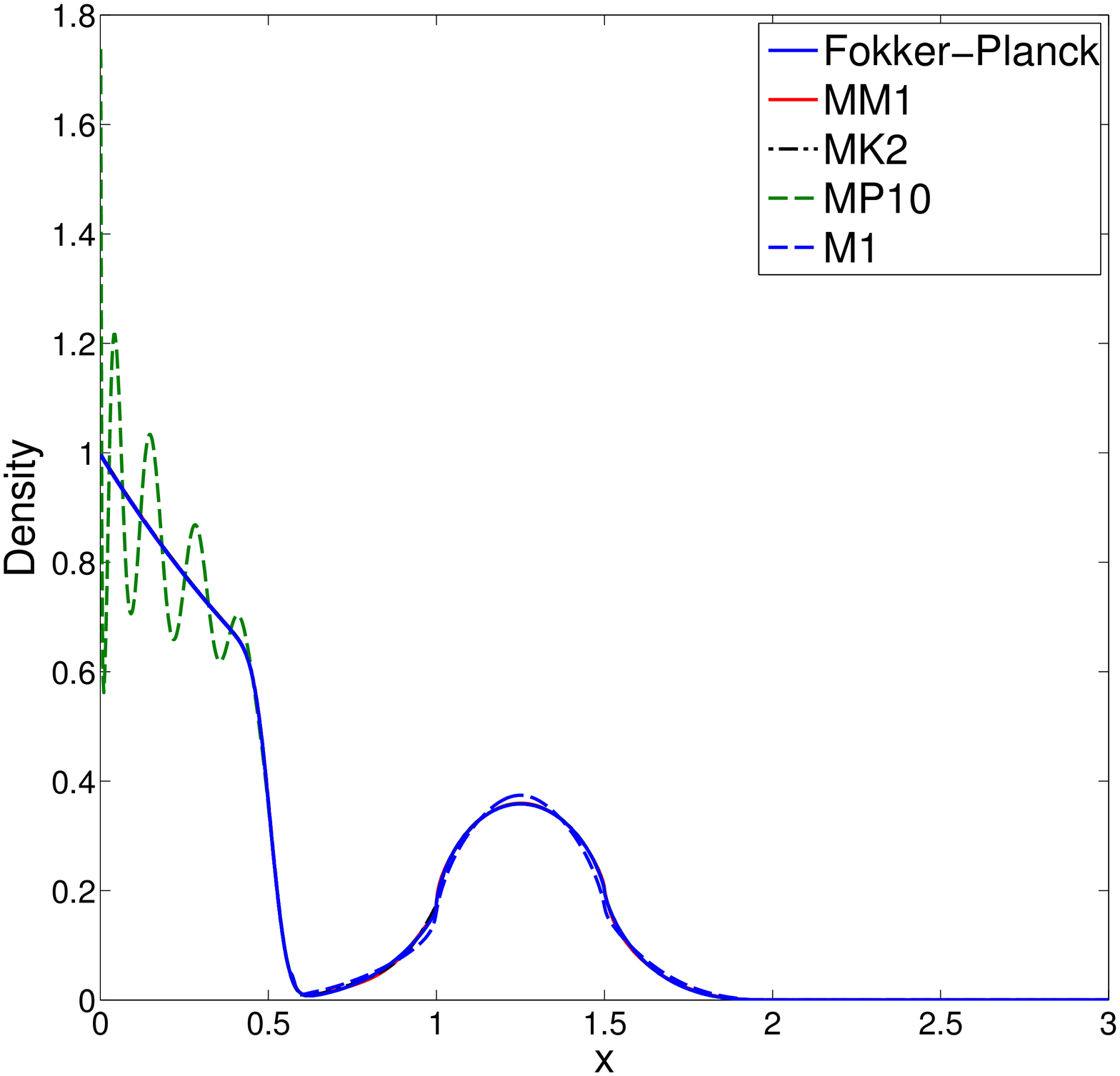}
                        \label{fig:SourcebeamCut1}
                        }
                        \subfloat[$t=1$]{
                                                \includegraphics[width = 0.48\textwidth]{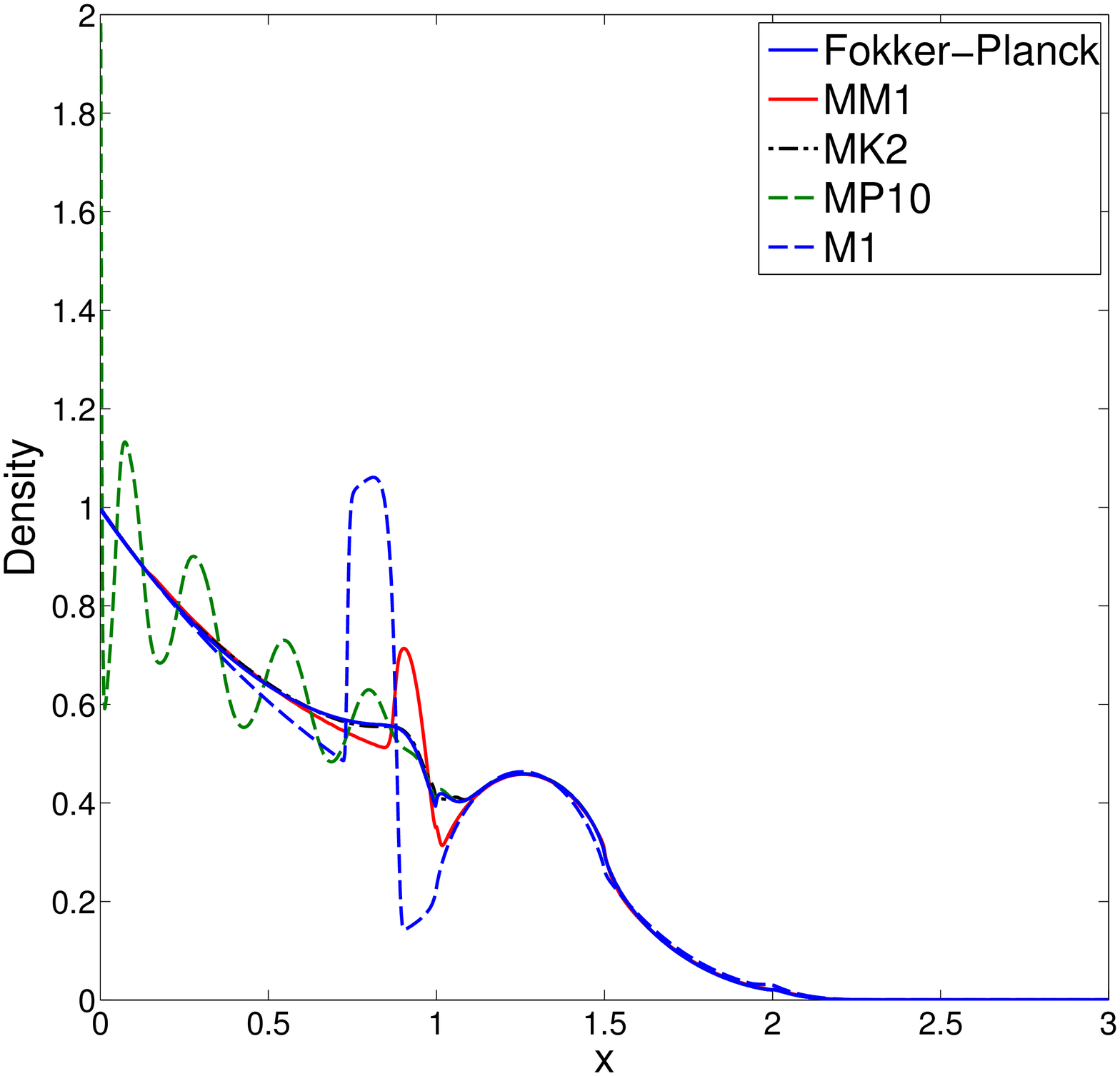}
                                                \label{fig:SourcebeamCut2}
                                                }\\
                  \subfloat[$t=2$]{
                                          \includegraphics[width = 0.48\textwidth]{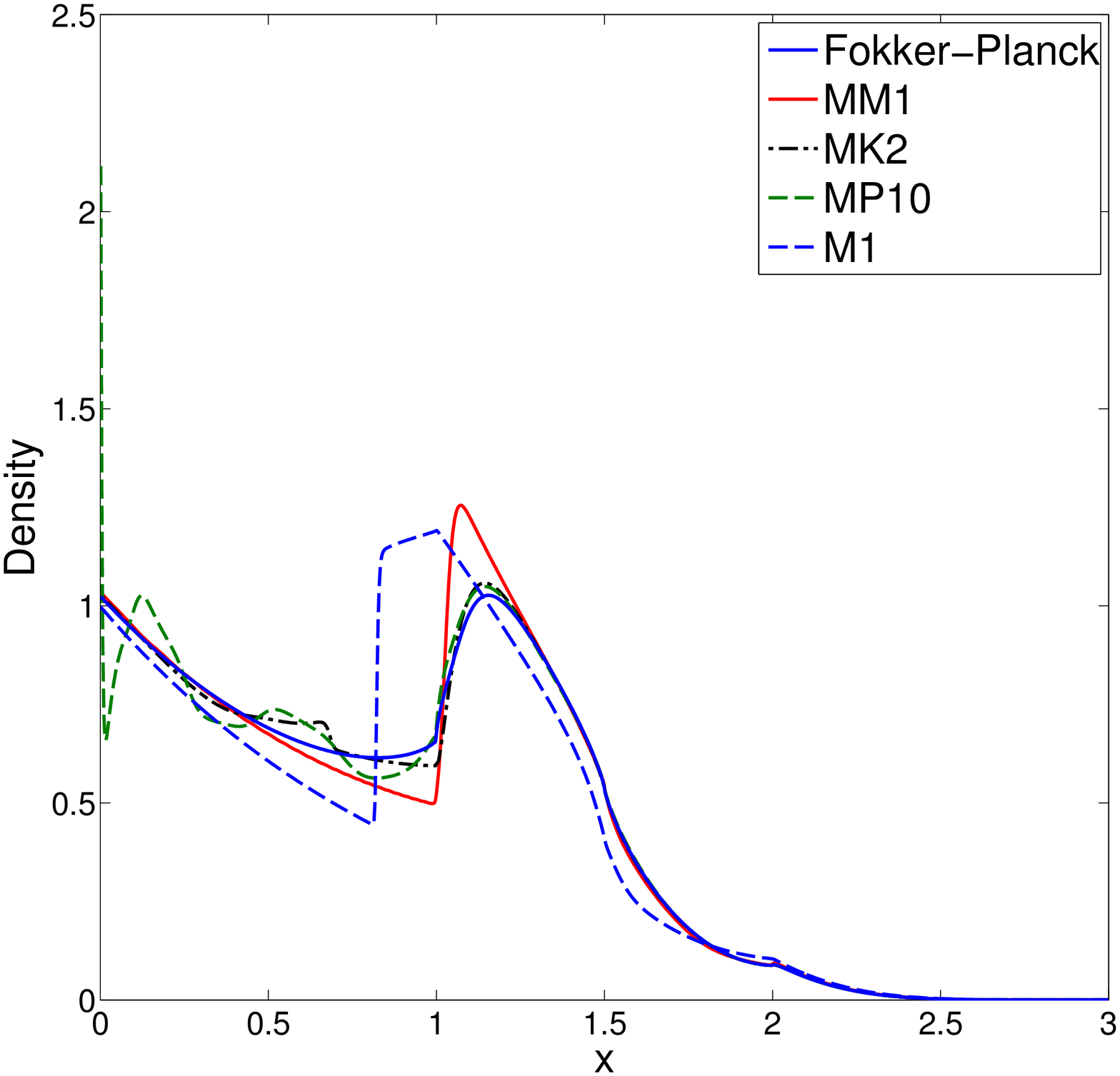}
                                          \label{fig:SourcebeamCut3}
                                          }
                                          \subfloat[$t=4$]{
                                                                  \includegraphics[width = 0.48\textwidth]{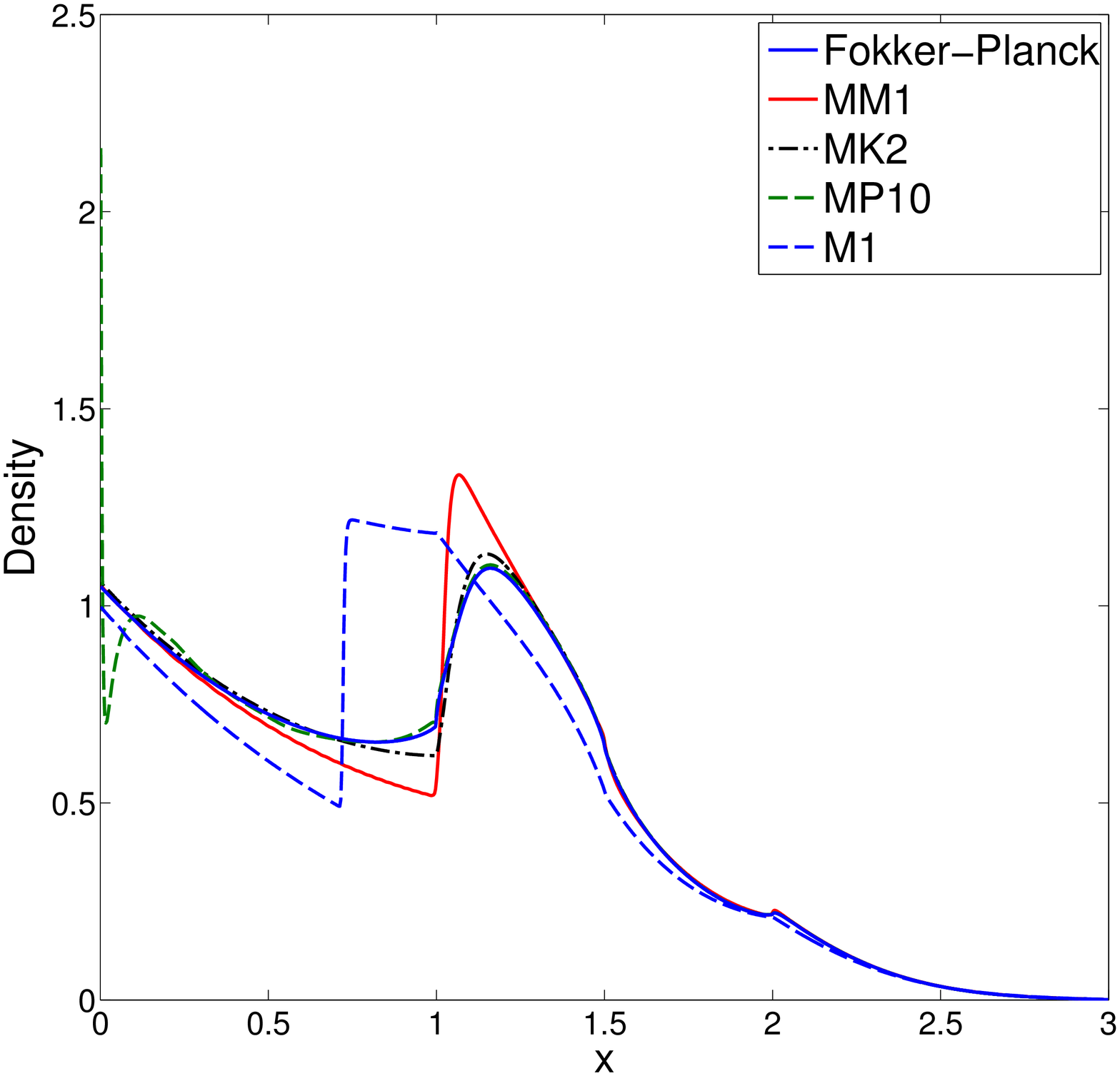}
                                                                  \label{fig:SourcebeamCut4}
                                                                  }\\                              
      	          }{Solutions of the Source-Beam test case.}{\label{fig:Sourcebeam}}
      	          
 \begin{table}[htbp]
   \centering
   \small\begin{tabular}{ccccccc}{Model} & {$L^1$} & {$L^2$} & {$L^\infty$} & {Char. $L^1$} & {Char. $L^2$} & {Char $L^\infty$}\\
  {MM${}_1$} & 0.043001 & 0.1082 & 0.39092 & 0.061205 & 0.10897 & 0.34317\\
   {MK${}_1$} & 0.048038 & 0.10979 & 0.40676 & 0.063409 & 0.10777 & 0.3108\\
    {MK${}_2$} & 0.013482 & 0.029286 & 0.10276 & 0.021587 & 0.033466 & 0.10427\\
     {MP${}_{10}$} & 0.04345 & 0.102 & 1.0156 & 0.050784 & 0.098315 & 1.0645\\ 
     {M${}_1$} & 0.13797 & 0.27406 & 0.5111 & 0.1879 & 0.26932 & 0.53131\\
     P${}_{21}$ & 0.018499 & 0.032356 & 0.11481 & 0.020531 & 0.029763 & 0.12231 
       \end{tabular}
   
     \caption{Relative $L^p$ errors, $p\in\{1,2,\infty\}$, for different models with respect to the Fokker-Planck solution $n_x = 1000$, $n_\mu = 800 $. Source-Beam test case. Characteristic errors evaluated at $t=2$.}
     \label{tab:Sourcebeam}
   \end{table}
 \section{Conclusions and outlook}
 \label{sec:CON}
We have established a theory for a mixed moment realizability approach in one space dimension. The resulting minimum entropy models as well as the corresponding Kershaw closures perform well in the numerical tests that we made. The zero-net-flux problem of full moment minimum entropy and Kershaw closures can be overcome. Additionally a consistent approximative model for the Fokker-Planck equation with Laplace-Beltrami operator has been derived.\\

Using the techniques provided in this paper an approximation using arbitrary numbers of moments can be derived.

%This approximative MK${}_n$ closure will converge for $n\to\infty$ towards the Fokker-Planck solution while the underlying density distribution function is always positive.
Compared with the corresponding minimum entropy model (which provides the same advantages) the MK${}_n$ models can be evaluated much more efficiently since the closure can be evaluated analytically without the need of many nonlinear solves. Thus it can compete in speed with the corresponding polynomial MP${}_n$ model.

We want to emphasize that it is in principle possible to use the QMOM-approach not only for full moments but as well for other partial or mixed moments. This can be done by simply plugging in $\psi_{QMOM_N}$ in the corresponding moment problem ensuring that the atoms are within the right support of the measure (e.g. for half moments in $[-1,0]$ and $[0,1]$ respectively). The influence of the moment problem on the atoms and densities in the QMOM-algorithm is to the authors' knowledge not investigated yet and may be subject to future research.

Until now there exists no general realizability theory for the moment problems in three dimensions. Explicit necessary and sufficient conditions have only been provided for up to $n=2$ \cite{Ker76}. 

A lot of work has been done to derive similar partial-moment models in $3D$. One general approach is the minimum-entropy quarter-moment approximation \cite{frank2005partial} which performs well for the transport equation. However, it has the same problems as the half-moment approximation in one dimension, namely that the Fokker-Planck operator must be closed consistently. The typical strategy which leads to decoupling provides unphysical results and does not reproduce the correct solution.
Again a formulation using a continuous distribution function is expected to provide reasonably good results in many situations. However, realizability theory for these mixed moments with $n > 2$ still has to be developed. We assume that the procedure works as in this paper, but without a general theory for partial moments it seems hard to formulate the correct conditions.\\

Mixed minimum-entropy moments of arbitrary order suffer from the problem of numerical inversion of the closure, similar to full moment M${}_n$ \cite{Hauck2010}. Therefore mixed Kershaw closures (which can be closed analytically) are of general interest because they can be evaluated with the same effort as spherical harmonics while guaranteeing the realizability (and especially the positivity) of the solution.

\bibliographystyle{siam}
\bibliography{RadLit}

\end{document}